\newcommand{\tb}[1]{\textbf{#1}}
\newcommand{\assign}{:=}
\newcommand{\nocomma}{}
\newcommand{\tmop}[1]{\ensuremath{\operatorname{#1}}}
\newcommand{\tmtextbf}[1]{{\bfseries{#1}}}
\newtheorem{corollary}{Corollary}
\newtheorem{definition}{Definition}
\newtheorem{lemma}{Lemma}
\newtheorem{theorem}{Theorem}
\newtheorem*{theorem*}{Theorem}
\newtheorem*{lemma*}{Lemma}
\newtheorem{example}{Example}
\begin{document}

\title{Fault-tolerant logical gates in quantum error-correcting
codes}
\author{Fernando \surname{Pastawski}}
\affiliation{Institute for Quantum Information and Matter, California Institute of Technology, Pasadena, California 91125, USA
}
\author{Beni \surname{Yoshida}}
\affiliation{Institute for Quantum Information and Matter, California Institute of Technology, Pasadena, California 91125, USA
}

\begin{abstract}
Recently, Bravyi and K{\"o}nig have shown that there is a tradeoff between fault-tolerantly implementable logical gates and geometric locality of stabilizer codes. 
They consider locality-preserving operations which are implemented by a constant depth geometrically local circuit and are thus fault-tolerant by construction.
In particular, they shown that, for local stabilizer codes in $D$ spatial dimensions, locality preserving gates are restricted to a set of unitary gates known as the $D$-th level of the Clifford hierarchy. 
In this paper, we elaborate this idea and provide several extensions and applications of their characterization in various directions. 

First, we present a new no-go theorem for self-correcting quantum memory. 
Namely, we prove that a three-dimensional stabilizer Hamiltonian with a locality-preserving implementation of a non-Clifford gate cannot have a macroscopic energy barrier. 
This result implies that in Haah's Cubic code and Michnicki's welded code non-Clifford gates do not admit such an implementation. 

Second, we prove that the code distance of a $D$-dimensional local stabilizer code with non-trivial locality-preserving $m$-th level Clifford logical gate is upper bounded by $O(L^{D+1-m})$. 
For codes with non-Clifford gates ($m>2$), this improves the previous best bound by Bravyi and Terhal. 
Bombin's topological color codes saturate the bound for $m=D$. 

Third we prove that a qubit loss threshold of codes with non-trivial transversal $m$-th level Clifford logical gate is upper bounded by $1/m$. 
As such, no family of fault-tolerant codes with transversal gates in increasing level of the Clifford hierarchy may exist. 
This result applies to arbitrary stabilizer and subsystem codes, and is not restricted to geometrically-local codes. 

Fourth we extend the result of Bravyi and K{\"o}nig to subsystem codes. A technical difficulty is that, unlike stabilizer codes, the so-called union lemma does not apply to subsystem codes. This problem is avoided by assuming the presence of error threshold in a subsystem code, and the same conclusion as Bravyi-K{\"o}nig is recovered.
\end{abstract}


\maketitle

\section{Introduction}

Quantum error-correcting codes constitute an indispensable ingredient in the roadmap to fault-tolerant quantum computation as they offer the framework of enabling imperfect quantum gates and resources to implement arbitrarily reliable quantum computation~\cite{Shor96, Preskill98}. An essential feature for such codes is to admit a fault-tolerant implementation of a universal gate-set where physical errors should propagate in a benign and controlled manner. A paragon for fault-tolerant implementation of logical gates is provided by transversal unitary operations, \emph{i.e.} single qubit rotations acting independently on each  physical qubit. 

However, Eastin and Knill have proved that the set of transversal gates constitutes a finite group, and hence is not universal for quantum computation~\cite{Eastin09}, suggesting a tension between computational power and fault-tolerance. Recently, Bravyi and K{\"o}nig have further sharpened this tension for topological stabilizer codes supported on a lattice with geometrically local generators~\cite{Bravyi13b}. By extending their consideration to logical gates implemented by constant depth local quantum circuits as feasible proxy, they have shown that, in $D$ spatial dimensions, fault-tolerantly implementable logical gates are restricted to a set of unitary gates, known as the $D$-th level of the Clifford hierarchy~\cite{Gottesman99}. This result establishes a connection between two seemingly unrelated notions; fault-tolerance and geometric locality. 

The result by Bravyi and K{\"o}nig (BK) is motivated by considerations of topological stabilizer codes, which are also likely to suggest a host of future generalizations. In this paper, we begin to address open questions posed by the work of Bravyi and K{\"o}nig.

\subsection{Clifford hierarchy}

As in BK ~\cite{Bravyi13b}, the tensor product Pauli operators on $n$ qubits (denoted by $\mathsf{Pauli} = \langle X_j, Y_j, Z_j \rangle_{j \in [ 1, n]}$) and the corresponding Clifford hierarchy~\cite{Gottesman99} will play a central role. We provide a formal definition for the $m$-th level of the Clifford hierarchy $\mathcal{P}_m$.

\begin{definition}\label{def:CliffordHierarchy}
  We define the \tmtextbf{Clifford hierarchy} as $\mathcal{P}_0 \equiv  \mathbbm{C}$ (i.e. global complex phases), and then recursively as
  \begin{equation}
    \mathcal{P_{}}_{m + 1} = \{ U : \forall P \in \mathsf{Pauli},\  U P U^\dagger P^\dagger
    \in \mathcal{P}_m \}.
  \end{equation}
\end{definition}

Note that despite using a commutator in place of conjugation, the above definition coincides with the usual one for $m \geq 2$~\cite{Gottesman99, Bravyi13b}. (See appendix~\ref{sec:comparison} for comparison). $\mathcal{P}_{1}$ is a group of Pauli operators with global complex phases. $\mathcal{P}_{2}$ coincides with the \emph{Clifford group} and includes the Hadamard gate $H$, $\pi/2$ phase shift and the $\tmop{CNOT}$ gate. $\mathcal{P}_{3}$ includes some non-Clifford gates such as $\pi/4$ phase shift and the Toffoli gate. $\pi/2^{m-1}$ phase shift belongs to $\mathcal{P}_{m}$. Note that $\mathcal{P}_{m}$ is a set and is not a group for $m\geq 3$.

The Gottesman-Knill theorem assures that any quantum circuit composed exclusively from Clifford gates in $\mathcal{P}_2$, with computational basis preparation and measurement, may be efficiently simulated by a classical computer~\cite{Nielsen_Chuang}. In contrast, incorporating any additional non-Clifford gate to $\mathcal{P}_2$ results in a universal gate set. In theory, gates in the Clifford group can be implemented with arbitrarily high precision by using concatenated stabilizer codes~\cite{Gottesman98} or topological codes. Realistic systems also offer decoherence-free implementation of some Clifford gates. For instance, braiding of Ising anyons, that are believed to exist in the fractional quantum Hall effect state at filling fraction $\nu=5/2$, implements certain Clifford gates with an estimated error-rate being $10^{-30}$~\cite{Bravyi05}. For this reason, it is important to fault-tolerantly perform \emph{non-Clifford} logical gates outside of $\mathcal{P}_2$. 

\subsection{Summary of results}

Let us now summarize the main contributions of this work. We begin by providing a self-contained and arguably simpler derivation of BK's result. 
We then derive a key technical lemma to assess fault-tolerant implementability of logical gates for both stabilizer and subsystem error-correcting codes (lemma~\ref{lemma:hierarchy} in section~\ref{sec:review}). 

In addition, there are four main original contributions. Below, we provide a preliminary statement of each, deferring a more rigorous treatment to later sections.

\subsubsection{No-go result for self-correction}

First of all, we show that the property of self-correction imposes a further restriction on logical gates implementable by constant depth local circuits. Namely, we find that the assumption of having no string-like logical operators reduces the level of the implementable Clifford hierarchy by one with respect to BK's result. 

\begin{theorem*}\emph{\tb{[Self-correction]}}
If a $D$-dimensional stabilizer Hamiltonian, consisting of geometrically local terms with bounded norms, has a macroscopic energy barrier, the set of logical gates, admitting a locality-preserving implementation, is restricted to $\mathcal{P_{}}_{D - 1}$.
\end{theorem*}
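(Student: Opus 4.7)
The strategy is to graft the self-correction hypothesis onto the iterated commutator argument behind BK's characterization, saving one step in the induction and hence one level of the Clifford hierarchy. The starting point will be a well-known structural consequence of a macroscopic energy barrier: no nontrivial logical operator can be supported within a tube of $O(1)$ cross section. Were such a string-like logical to exist, one could grow it by a sequential local Pauli process whose intermediate syndromes lie on the boundary of a finite sub-tube, incurring only an $O(1)$ energy penalty in contradiction with the assumed barrier scaling. I would either import this no-string-like-logical criterion from the self-correction literature or derive it as a short preliminary lemma.

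Next I would retrace BK's cleaning-and-commutator chain via lemma~\ref{lemma:hierarchy}. Fix a locality-preserving implementation of $U$ and, for each of $D$ independent directions, pick a codimension-one slab decomposition with thickness set by the light cone of $U$. For any Pauli logical $P$ supported on such a slab, the group commutator $U P U^{\dagger} P^{\dagger}$ remains locality preserving, acts one level lower in the hierarchy, and, through the cleaning lemma, can be taken modulo stabilizers to have support inside the slab. Iterating with Paulis on $k$ transverse slabs, the $k$-fold commutator is determined, modulo stabilizers, by operators supported in the intersection: a region of codimension $k$ and bounded transverse thickness. BK stop at $k = D$, where a codimension-$D$ ball carries no nontrivial logical by the cleaning lemma, yielding $U \in \mathcal{P}_D$. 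Under the macroscopic barrier hypothesis I would stop one step earlier at $k = D-1$, where the intersection is a thickened string. The no-string-like-logical criterion then forces the $(D-1)$-fold commutator to act trivially on the code space, and unwinding the recursive definition of the Clifford hierarchy places $U$ in $\mathcal{P}_{D-1}$.

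The main obstacle lies at the interface between the two ingredients. Each commutator broadens the support by an $O(1)$ light cone, so after $D-1$ iterations the tube's cross section, though independent of system size, exceeds a single unit cell; one must verify that the no-string-like-logical consequence of the barrier remains valid at this broader scale. A second delicate point is that the cleaning lemma yields string-supported representatives only modulo stabilizer equivalence, so the energy-barrier argument must be applied to the equivalence class rather than to a fixed representative. Handling these two issues carefully will be the central technical step.
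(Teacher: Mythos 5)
Your overall strategy---convert the macroscopic energy barrier into the absence of string-like logical operators, then save one step in the BK induction---is indeed the paper's strategy, and your first ingredient matches the paper's (which simply asserts the implication from the barrier to no strings). The gap is in how you run the induction. Your iterated-commutator scheme needs, for each direction, a representative of \emph{every} logical Pauli class supported on a codimension-one slab of constant thickness, so that the $k$-fold commutator lands on the intersection of $k$ transverse slabs. The cleaning lemma does not supply this: cleaning $P$ \emph{onto} a slab requires the \emph{complement} of the slab to be correctable, and the complement of a thin slab consists of two macroscopic chunks which in general support logical operators (already in the 3D toric code, the $\bar Z$ string running along direction $1$ has no representative inside a thin slab orthogonal to direction $1$). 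Since $[U]_L\in\mathcal{P}_{D-1}$ requires the group commutator with every logical Pauli class to descend one level, the classes without slab representatives are simply not handled by your argument, and there is no reason to expect no-strings codes (e.g.\ Haah's code, with its fractal logical operators) to be better behaved in this respect.

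The paper's proof sidesteps this by never cleaning \emph{onto} small regions, only \emph{off of} large ones. It covers the lattice by $D$ regions $R_0,\ldots,R_{D-1}$, each a union of spatially disjoint parallel tubes (Fig.~\ref{fig:ParallelTubes}); each individual tube supports no logical operator by the no-strings property, and---this is the step your proposal is missing---the union lemma (lemma~\ref{lemma_union_stab}) promotes this to cleanability of each entire region $R_j$, even though $R_j$ occupies a constant fraction of the lattice. Lemma~\ref{lemma:hierarchy2} then applies with $D$ cleanable regions and yields $\mathcal{P}_{D-1}$: at each inductive step an \emph{arbitrary} Pauli logical is cleaned off the last region onto the union of the remaining ones, which is always possible precisely because the last region is cleanable. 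Your two closing concerns (light-cone broadening of the cross section, and working modulo stabilizer equivalence) are legitimate but secondary: the first is absorbed by the $2^{j-1}s_U$ buffers built into lemma~\ref{lemma:hierarchy2} together with the fact that the no-strings property holds for tubes of any constant cross section, and the second disappears once one argues with cleanable regions rather than with fixed representatives. The missing idea is the partition into $D$ tube-families combined with the union lemma.
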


This theorem allows us to obtain a new no-go result for self-correcting quantum memory in three spatial dimensions; a three-dimensional topological stabilizer Hamiltonian with a locality-preserving non-Clifford gate cannot have a macroscopic energy barrier. The proof is presented in section~\ref{sec:self-correction}. The result establishes a somewhat surprising connection between ground state properties and excitation energy landscape. While technically simple, this observation is arguably the most interesting. 

\subsubsection{Upper bound on code distance}

Our second result concerns a tradeoff between the code distance and locality-preserving implementability of logical gates. Namely, we find that implementability of logical gates from the higher-level Clifford hierarchy reduces an upper bound on the code distance of a topological stabilizer code. 

\begin{theorem*}\emph{\tb{[Code distance]}}
If a stabilizer code with geometrically-local generators in $D$ spatial dimensions admits a locality-preserving implementation of a logical gate $U \in \mathcal{P}_{m}$ for $m\geq2$ (but $U\not \in \mathcal{P_{}}_{m-1}$), then its code distance is upper bounded by $d \leq O(L^{D+1-m})$.
\end{theorem*}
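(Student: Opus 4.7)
The plan is to iteratively descend the Clifford hierarchy by taking commutators with logical Paulis, producing a locality-preserving implementation of a nontrivial logical Pauli supported on a region whose volume directly bounds the code distance.

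Concretely, set $V_{0} := U$ and $\bar{V}_{0} := \bar{U} \in \mathcal{P}_{m} \setminus \mathcal{P}_{m-1}$. For $k = 1, \dots, m-1$, I would pick a logical Pauli $\bar{P}_{k}$ with physical representative $P_{k}$ and set $V_{k} := V_{k-1} P_{k} V_{k-1}^{\dagger} P_{k}^{\dagger}$. Lemma~\ref{lemma:hierarchy} guarantees that each $V_{k}$ is again a locality-preserving circuit of bounded range implementing a logical gate $\bar{V}_{k} \in \mathcal{P}_{m-k}$. Choose $\bar{P}_{k}$ inductively so that $\bar{V}_{k} \notin \mathcal{P}_{m-k-1}$; such a witness Pauli exists by Definition~\ref{def:CliffordHierarchy} whenever $\bar{V}_{k-1} \notin \mathcal{P}_{m-k}$. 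Consequently $\bar{V}_{m-1}$ is a nontrivial logical Pauli.

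The geometric heart of the argument is to arrange the physical supports so that $V_{m-1}$ lives on a region of volume $O(L^{D+1-m})$. I would take the representative $P_{k}$ to lie on a codimension-one slab $S_{k}$ perpendicular to the $k$-th coordinate axis; such a slab representative exists by the Bravyi--Terhal cleaning argument, which invokes the union lemma for stabilizer codes. Two support bounds then control $\mathrm{supp}(V_{k})$: first, outside $\mathrm{supp}(V_{k-1})$ the circuit $V_{k-1}$ acts as the identity and the commutator collapses to $P_{k} P_{k}^{\dagger} = I$, giving $\mathrm{supp}(V_{k}) \subseteq \mathrm{supp}(V_{k-1})$; second, since $V_{k-1}$ is a constant-depth circuit of some range $r$, conjugating $P_{k}$ by $V_{k-1}$ spreads its support by at most $r$, giving $\mathrm{supp}(V_{k}) \subseteq S_{k}^{+r}$. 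Iterating, $\mathrm{supp}(V_{m-1}) \subseteq S_{1}^{+r} \cap \cdots \cap S_{m-1}^{+r}$, a thickened intersection of $m-1$ mutually orthogonal codimension-one slabs, of volume $O(L^{D+1-m})$.

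To conclude, $V_{m-1}$ is a codespace-preserving unitary on this region $R$ acting as a nontrivial logical Pauli. If $R$ were correctable, the Pauli expansion of any codespace-preserving unitary supported on $R$ would only contribute stabilizer terms to its action on codewords, forcing a scalar action and contradicting nontriviality; so $R$ must support a nontrivial logical Pauli operator of weight at most $|R|$, yielding $d \leq O(L^{D+1-m})$. The main obstacle is ensuring that the iterated commutator $V_{k}$ retains $O(1)$ range as $k$ grows, which is precisely what Lemma~\ref{lemma:hierarchy} buys us; a secondary subtlety is guaranteeing the slab representatives $P_{k}$ via cleaning, which is where the union lemma (and thus the stabilizer-code assumption) enters.
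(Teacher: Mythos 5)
Your overall strategy---descend the Clifford hierarchy by iterated group commutators with logical Paulis and track how the support shrinks---is the right intuition (it is essentially how lemma~\ref{lemma:hierarchy} itself is proved), and two of your ingredients are sound: the spread at most doubles at each commutator, so it stays $O(1)$ for fixed $m$, and a codespace-preserving unitary with nontrivial logical action cannot be supported on a correctable region. The gap is the cleaning step. You need each witness Pauli $\bar P_k$ to admit a physical representative supported \emph{on} a thin slab $S_k$ of a prescribed orientation. The cleaning lemma goes the other way: it lets you push a logical operator \emph{off} a correctable region $R$ onto $R^c$; to support $\bar P_k$ on $S_k$ you would need $S_k^{c}$ to be correctable, and the complement of a thin slab is essentially the whole lattice, which is never correctable for a nontrivial code. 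Nor does the Bravyi--Terhal slab argument help: it shows that \emph{some} logical Pauli lives on \emph{some} slab, not that a \emph{prescribed} logical Pauli (your witness, which is dictated by the requirement $\bar V_k \notin \mathcal{P}_{m-k-1}$) lives on a slab of a \emph{prescribed} orientation. Concretely, in the 2D toric code the horizontal $\bar X$ loop has no representative on a thin vertical annulus: a vertical $\bar Z$ loop can be deformed to avoid any such annulus and would then commute with everything supported there, contradicting $\{\bar X,\bar Z\}=0$. So the intersection-of-orthogonal-slabs geometry cannot be set up by cleaning.

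The paper sidesteps this by arguing by contradiction rather than by construction. Partition the lattice into $m$ regions $R_0,\dots,R_{m-1}$, each a union of spatially disjoint parallel $(D+1-m)$-dimensional tubes. If every $R_j$ were cleanable, lemma~\ref{lemma:hierarchy} (in which the Paulis are cleaned \emph{out of} the assumed-cleanable regions---the legitimate direction of cleaning) would force $[U]_L \in \mathcal{P}_{m-1}$, contradicting the hypothesis. Hence some $R_j$ supports a nontrivial logical Pauli, and the union lemma (lemma~\ref{lemma_union_stab}) localizes such an operator onto a single tube of volume $O(L^{D+1-m})$, giving the distance bound. In other words, you must let the code decide where the small-support logical operator lives, rather than prescribing in advance which slabs carry the witness Paulis. (A minor point: the facts that $[\bar V_k]\in\mathcal{P}_{m-k}$ and that a witness Pauli keeping it outside $\mathcal{P}_{m-k-1}$ exists follow directly from Definition~\ref{def:CliffordHierarchy}, not from lemma~\ref{lemma:hierarchy}.)
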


For a code with a non-Clifford gate ($m>2$), this result improves the previous best bound $d \leq O(L^{D-1})$ for topological stabilizer codes~\cite{Bravyi09}. The bound is found to be tight for $m=D$ as Bombin's topological color codes saturates it~\cite{Bombin06,Bombin14}. The proof is presented in section~\ref{sec:self-correction}. The theorem also applies to a topological subsystem code if its stabilizer subgroup admits a complete set of geometrically local generators. Such subsystem codes include Bombin's topological \emph{gauge} color code~\cite{Bombin14}. 

\subsubsection{Loss threshold}

Our third result relates the loss threshold in stabilizer and subsystem error-correcting codes with the set of transversally implementable logical gates. 

\begin{theorem*}\emph{\tb{[Loss threshold]}}
  Suppose we have a family of subsystem codes with a loss tolerance $p_l > 1
  / n$ for some natural number $n$. Then, any transversally implementable
  logical gate must belong to $\mathcal{P}_{n -
  1}$.
\end{theorem*}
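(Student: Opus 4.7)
My plan is to combine the tensor-product form of a transversal gate with a correctability argument: I partition the physical qubits into $n$ regions each small enough to be correctable under losses, and then use the resulting factorization of $U$ to show that a suitable nested commutator of logical Paulis with $\bar U$ is a scalar. First I partition the $N$ physical qubits into $n$ regions $R_1,\ldots,R_n$ each of size at most $\lceil N/n\rceil \le p_l\, N$. By hypothesis each $R_i$ is correctable, so the subsystem cleaning lemma (packaged in lemma~\ref{lemma:hierarchy}) supplies two ingredients: every logical Pauli admits a Pauli representative supported on the complement of any chosen $R_i$, and any Pauli supported entirely inside a single $R_i$ acts trivially on the logical subsystem.

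Given arbitrary logical Paulis $\bar P_1,\ldots,\bar P_{n-1}$, I pick Pauli representatives $P_k$ of $\bar P_k$ that are identity on $R_k$ and form the iterated commutators $C_0:=U$, $C_k:=[C_{k-1},P_k]$. Since $U$ is transversal it factorizes as $U=\bigotimes_{i=1}^n U_{R_i}$, and using the identity $[\bigotimes_i A_i,\bigotimes_i B_i]=\bigotimes_i [A_i,B_i]$ I show by induction on $k$ that $C_k$ is the identity on $R_1\cup\cdots\cup R_k$ and remains a tensor product of unitaries on the remaining $n-k$ regions. At step $k$ the new factor on $R_k$ collapses to the identity because $P_k|_{R_k}=I$, while factors on earlier regions stay trivial because $C_{k-1}$ is already trivial there. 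Setting $k=n-1$ leaves $C_{n-1}$ supported on $R_n$ alone, so its logical action is a scalar, i.e.
\[
 [\cdots[[\bar U,\bar P_1],\bar P_2]\cdots,\bar P_{n-1}]\in\mathcal{P}_0,
\]
and since this holds for any choice of the $\bar P_k$'s, unrolling definition~\ref{def:CliffordHierarchy} gives $\bar U\in\mathcal{P}_{n-1}$.

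The main obstacle I expect is the careful use of the cleaning lemma for subsystem codes: the representatives $P_k$ must be honest Pauli operators so that the tensor-commutator identity applies factor by factor, and ``supported on one correctable region'' must be translated into ``scalar on the logical subsystem'' rather than merely ``gauge operator times stabilizer''. Both points are exactly what the subsystem cleaning lemma packaged in lemma~\ref{lemma:hierarchy} provides, and because no geometric locality is invoked along the way, the conclusion applies to arbitrary stabilizer and subsystem codes as claimed.
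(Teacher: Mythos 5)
Your reduction to lemma~\ref{lemma:hierarchy} via iterated group commutators is essentially the paper's argument (indeed it is a re-derivation of that lemma), but the step that constructs the regions contains a genuine gap. You partition the qubits deterministically into $n$ blocks of size at most $\lceil N/n\rceil$ and assert that each block is correctable ``by hypothesis.'' The loss tolerance $p_l$ is not a worst-case guarantee: as defined in the paper, it says that a \emph{random} subset in which each qubit is erased independently with probability $p<p_l$ is correctable with probability tending to one. It does \emph{not} imply that every subset of $p_l N$ qubits is correctable --- that would force the code distance to be at least $p_l N$, which fails for essentially every code of interest. The toric code, which the paper cites as saturating this very theorem, has $p_l=1/2$ but distance only $O(\sqrt{N})$, so an adversarially chosen block of $N/3$ qubits can easily contain a logical operator and your chosen partition may consist entirely of non-correctable regions.

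The fix is the probabilistic construction the paper uses: assign each qubit to one of the $n$ regions uniformly at random. Each region then has exactly the i.i.d.\ inclusion law with parameter $1/n<p_l$, so by the definition of the loss threshold each is correctable with probability approaching unity, and a union bound over the $n$ regions shows that for all sufficiently large members of the family there \emph{exists} a partition into $n$ simultaneously correctable (hence both bare- and dressed-cleanable) regions. Once such a partition is in hand, your commutator argument --- equivalently, a direct application of lemma~\ref{lemma:hierarchy} with one region playing the role of $R_0$ --- completes the proof exactly as in the paper. Note also that your final step needs the last region to be bare-cleanable (no \emph{dressed} logical supported on it) so that the innermost commutator commutes with all bare logical Paulis; correctability gives you precisely this, but it is worth stating, since ``scalar on the logical subsystem'' is the bare-cleanability condition rather than a consequence of small support alone.
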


We would like to emphasize that the above theorem does \emph{not} assume geometric locality of generators or lattice structures, and holds for arbitrary stabilizer \emph{and} subsystem codes. The proof is presented in section~\ref{sec:loss}. 

\subsubsection{Subsystem code and the Clifford hierarchy}

Finally, the main technical result is to generalize BK's result to subsystem codes with local generators. A difficulty is that the so-called union lemma does not apply to a topological subsystem code~\cite{Bravyi10, Bravyi11}. Minimal supplementary assumptions, such as a finite loss threshold for the code and a logarithmically increasing code distance, are required in order to recover the same thesis as BK's for locality-preserving logical gates.

\begin{theorem*}\emph{\tb{[Subsystem code]}}
Consider a family of subsystem codes with geometrically local gauge generators in $D$ spatial dimensions such that the code has a constant loss threshold and a code distance growing at least logarithmically in the number of physical qubits. 
Then, any locality-preserving logical unitary, fully supported on an $m$-dimensional region ($m\leq D$), has a logical action included in $\mathcal{P}_{m}$.
\end{theorem*}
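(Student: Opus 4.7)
The plan is to mirror the inductive scheme of Bravyi--K\"onig (BK), with the union lemma replaced by a cleaning argument drawn from the finite loss threshold hypothesis. I proceed by induction on the dimension $m$ of the support region. For the base case $m=0$, the unitary $U$ is supported on an $O(1)$ set of qubits; since the loss threshold $p_l$ is a positive constant, for $n$ sufficiently large this set has size less than $p_l n$ and is therefore correctable, so the logical action of $U$ is trivial and lies in $\mathcal{P}_0 = \mathbbm{C}$.

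For the inductive step, assume the result for dimension $m-1$ and let $U$ be locality-preserving on an $m$-dimensional region $R$. By Definition~\ref{def:CliffordHierarchy}, to show that the logical action $\bar U$ lies in $\mathcal{P}_m$ it suffices to show $\bar U \bar P \bar U^{\dagger} \bar P^{\dagger} \in \mathcal{P}_{m-1}$ for every logical Pauli $\bar P$. Choose a physical representative $P$ of $\bar P$ adapted to a chosen $(m-1)$-dimensional slab of $R$, using cleanability of its complement to localize most of $P$'s support away from the slab. Because $U$ is realized by a constant-depth local circuit, the commutator $V = U P U^{\dagger} P^{\dagger}$ has physical support contained in the $O(1)$-neighborhood of $\text{supp}(P) \cap R$, which by construction is effectively contained in the slab.

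The central task is then to replace $V$ by a gauge-equivalent locality-preserving operator $V'$ whose support is genuinely an $(m-1)$-dimensional sub-region, at which point the inductive hypothesis yields $\bar V' \in \mathcal{P}_{m-1}$ and hence $\bar U \in \mathcal{P}_m$. For stabilizer codes the union lemma provides this splitting canonically. In the subsystem setting I replace it by an iterated cleaning procedure: using correctability (from $p_l > 0$), any portion of $V$ lying in the bulk of $R$ away from the slab can be absorbed into the gauge group by multiplying by an appropriate gauge element, yielding $V'$ with identical logical action. The logarithmic code distance ensures that this sweeping can be arranged over disjoint patches in $O(1)$ rounds while preserving locality.

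The principal obstacle is the consistency of these local gauge corrections across overlapping cleaning patches. For stabilizer codes the union lemma trivializes this via an algebraic product decomposition into stabilizer factors. In the subsystem setting, local gauge choices may fail to glue into a single global gauge element, and a priori the mismatch could carry nontrivial logical action. The assumption $d \geq c \log n$ is essential here: any accumulated mismatch is confined to a region of size strictly less than $d$ and is therefore, by correctability of sub-distance regions, itself a gauge operator rather than a nontrivial logical. Once this gluing step is secured, the rest of the argument transcribes the BK induction verbatim.
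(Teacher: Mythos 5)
Your inductive skeleton (reduce the dimension of the support by one via a group commutator with a suitably cleaned Pauli representative) is the right shape --- it is essentially lemma~\ref{lemma:hierarchy} read geometrically --- but the proposal misses the actual content of the theorem, namely the construction of the \emph{bare}-cleanable region on which the induction must bottom out, and it misuses the loss-threshold hypothesis throughout. First, a loss threshold $p_l>0$ does \emph{not} imply that every set of fewer than $p_l n$ qubits is correctable; it only implies that a \emph{randomly chosen} set of density $p<p_l$ is correctable with probability tending to one. So neither your base case nor your claim that ``any portion of $V$ lying in the bulk of $R$ away from the slab'' can be absorbed into the gauge group follows from $p_l>0$ (and multiplying a non-Pauli unitary by gauge elements is not a support-reducing operation in any case; in the correct argument the support of the commutator shrinks automatically from transversality plus the choice of Pauli representative, and the unitary is never cleaned). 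Second, to clean a dressed Pauli $P$ so that its support meets $R$ only in a single low-dimensional slab you would need the macroscopic complement of that slab to be dressed-cleanable, which is false in general (a noncontractible loop in the toric code already supports a logical operator). The cleaning must instead be done out of regions that are unions of many pieces each of volume below $d$, glued by the union lemma. That lemma does hold for dressed-cleanable regions (lemma~\ref{lemma_union_sub}), so the intermediate regions are fine; but after $m$ rounds of commutators the residual operator is supported on a union of $\Theta(n/\log n)$ scattered constant-size balls, and triviality of its logical action requires that \emph{union} to be bare-cleanable, so that every bare logical Pauli can be pushed to its complement via lemma~\ref{lemma:partition_sub}. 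This is exactly where the union lemma fails for subsystem codes, and it is the step your ``gluing'' paragraph gestures at without solving: the mismatch is not confined to a sub-distance region but is spread over a macroscopic union of patches.

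The paper closes this gap by using the loss threshold in a completely different way (a Hastings-style probabilistic construction): take a random loss pattern $R_{\rm loss}$ of density $p_0<p_l$, which is bare-cleanable by the very definition of the loss threshold; with probability $1-O(1/\mathrm{poly}(n))$ it contains a ball of radius $r$ in every cell of volume $c\log n$, and the union $R_0$ of one such ball per cell is bare-cleanable because it is a \emph{subset} of the bare-cleanable set $R_{\rm loss}$ --- no union lemma is needed for it. These balls seed a skewed tiling whose fattened $j$-dimensional skeleta give the dressed-cleanable regions $R_j$ for $j\geq 1$, with the $\Omega(\log^{1-1/D} n)$ distance bound and the disentangling lemma ensuring cleanability of the cells; lemma~\ref{lemma:hierarchy} then concludes. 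Without some such construction of a nonempty bare-cleanable $R_0$ you only obtain $\mathcal{P}_{m+1}$ (cf.\ corollary~\ref{Cor:SusbsystemSimpleBound}, which takes $R_0=\emptyset$), so the proposal as written does not establish the stated bound.
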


The proof is presented in section~\ref{sec:subsystem}. Supplementary assumptions arise from considerations on fault-tolerance of the code. A finite loss threshold is necessary for a finite error threshold against depolarization. A logarithmically increasing code distance is necessary for the recovery failure probability to vanish at least polynomially in the number of physical qubits. Supplementary assumptions are not required for subsystem codes with geometrically local stabilizer generators as the union lemma holds for such codes.

\subsection{Organization of the paper}

The paper is organized as follows. In section~\ref{sec:review}, we provide a definition of subsystem codes and derive a key technical tool to study fault-tolerant implementability of logical gates. We then provide a derivation of BK's result. In section~\ref{sec:loss}, we derive a tradeoff between the loss-tolerance and transversal implementability of logical gates. In section~\ref{sec:subsystem}, we generalize BK's result to topological subsystem codes. 
In section~\ref{sec:self-correction}, we find a restriction on the set of logical gates admitting a locality-preserving implementation arising from self-correction. 
We then derive an upper bound on the code distance of topological stabilizer codes. 
Section~\ref{sec:discussion} is devoted to summary and discussion.

\section{Fault-tolerance versus locality}\label{sec:review}

In this section, we review the framework of subsystem error-correcting codes and derive a tool relating fault-tolerant implementability of logical gates and locality (or non-locality) of logical gates in multi-partitions. We also present a qualitative derivation of BK's result for topological stabilizer codes.

\subsection{Fault-tolerant implementation of logical gates}

Let us begin with a brief review of the stabilizer formalism~\cite{Gottesman96}. Given the Hilbert space of $n$ qubits $\mathcal{H} = ( \mathbbm{C}^2)^{\otimes n}$, a Pauli stabilizer group $\mathcal{S}$ is an abelian subgroup of the Pauli group on $n$ qubits which does not contain $-\mathbbm{1}$. The codeword space of the stabilizer group $\mathcal{S}$ is defined to be the subspace $\mathcal{C ( \mathcal{S}) \subseteq \mathcal{H}}$ of common $+1$ eigenvectors for stabilizers in $\mathcal{S}$ (\emph{i.e.} $\mathcal{C} ( \mathcal{S}) = \left\{ | \psi \rangle \in \mathcal{H} : \forall S \in \mathcal{S} , \ S | \psi \rangle = | \psi \rangle \right\}$). \emph{Topological} stabilizer codes are characterized by having their constituent physical qubits laid out on a $D$-dimensional lattice, in such a way that the stabilizer group $\mathcal{S}$ admits a complete set of geometrically local generators $\mathcal{S} = \langle S_1, \ldots, {S_{}}_{n-k} \rangle$ (\emph{i.e.} each generator $S_j$ is supported on a ball of constant radius $\xi$). Here $k$ is the number of logical qubits encoded in the codeword space $\mathcal{C} ( \mathcal{S})$ when $S_{j}$ are independent generators. In the present paper, the word \emph{topological} refers to quantum error-correcting codes defined on a lattice with geometrically local generators. 

Ideally, one hopes a logical gate $U$ to be implemented by a transversal unitary operation (\emph{i.e.} an operator with a tensor product form $U=\otimes_{j=1}^{n}U_{j}$ where $U_{j}$ is a single qubit rotation acting on $j$-th physical qubit) so that local errors at physical qubits do not propagate to other qubits. 
For fault-tolerant implementation, it is desirable that a logical gate $U$ admits an implementation by a constant-depth quantum circuit in order to avoid uncontrolled error propagations. 
Ideally, the gates in such a circuit should additionally be geometrically local to simplify their physical realization.
For this reason, it is important to classify logical gates of quantum error-correcting codes admiting such an implementation. 
In the present paper, the word \emph{locality-preserving} refers to a logical unitary operator that can be implemented by a constant-depth geometrically-local circuit. 
Note that stabilizer codes admit transversal implementation of Pauli logical gates in $\mathcal{P}_{1}$, and CSS stabilizer codes admit quasi-transverse
\footnote{If two copies of a CSS code are stacked such that corresponding qubits are geometrically close, performing pairwise $\tmop{CNOT}$ on all physical qubits implements a $\tmop{CNOT}$ gates on all pairs of encoded qubits in the two copies.} 
implementations of certain $\tmop{CNOT}$ gate in $\mathcal{P}_{2}$.

Bravyi and K{\"o}nig˜\cite{Bravyi13b} consider the set of logical gates that may be realized on a topological stabilizer code with a constant depth local quantum circuit. Their main result is stated as follows:

\begin{theorem}\emph{\tb{[Bravyi and K{\"o}nig]}}
If $U$ is a morphism between D-dimensional topological stabilizer codes $\mathcal{C}_1$ and $\mathcal{C}_2$, and U is implementable by a constant-depth quantum circuit with short-range gates, then $U$ is a $\mathcal{P}_D$-morphism for all large enough $L$.\label{Thm:BK}
\end{theorem}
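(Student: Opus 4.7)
The plan is to induct on the spatial dimension $D$, using the recursive definition of $\mathcal{P}_m$ directly: it suffices to show that for every Pauli operator $P$, the commutator $V := U P U^\dagger P^\dagger$ acts on the codespace of $\mathcal{C}_2$ as a logical operator in $\mathcal{P}_{D-1}$, since Definition~\ref{def:CliffordHierarchy} then promotes the logical action of $U$ to $\mathcal{P}_D$. The base case $D = 1$ uses two ingredients. The light cone of $U$ has constant radius $r$, so $U P U^\dagger$ differs from $P$ only within an $r$-neighborhood of $\mathrm{supp}(P)$; and the union lemma lets one clean $P$ to a logical Pauli of bounded diameter. Hence $V$ has bounded support, and for a one-dimensional topological code whose distance grows with $L$ the only bounded-support logical operators are Paulis, placing the logical action of $V$ in $\mathcal{P}_0$ and thus $U$ in $\mathcal{P}_1$ on the logical qubits.

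For the inductive step, the geometric heart of the argument is a $(D{+}1)$-coloring of the $D$-dimensional lattice by thickened simplex cells. I would partition the lattice into $D+1$ disjoint regions $R_0, R_1, \ldots, R_D$ such that each $R_k$ is a disjoint union of blobs $\{R_k^{(i)}\}_i$ whose pairwise separation strictly exceeds the light cone $r$ of $U$. The union (cleaning) lemma for stabilizer codes guarantees that every logical Pauli $P$ is stabilizer-equivalent to one supported in the complement of any chosen color class; without loss of generality take $P$ to avoid $R_0$. With $P$ so cleaned, $V = U P U^\dagger P^\dagger$ is trivial outside an $r$-neighborhood of $\partial R_0$: in the interior of $R_0$, $P$ acts as the identity and the light cone of $U$ keeps $U P U^\dagger$ trivial there; in the interior of the complement, $U P U^\dagger$ cancels $P^\dagger$ because the light cone of $U$ does not reach $R_0$ from those points. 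The large separation between blobs $R_0^{(i)}$ additionally factorizes $V$ into a tensor product of operators, one per blob.

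The crux is to recognize this product structure as producing an effectively $(D{-}1)$-dimensional problem and to apply the inductive hypothesis. For each blob $R_0^{(i)}$, the corresponding tensor factor of $V$ is supported in the thickened shell around $\partial R_0^{(i)}$, which is geometrically a $(D{-}1)$-dimensional slab. I would view this as a locality-preserving logical operator on a $(D{-}1)$-dimensional topological stabilizer code obtained by restricting and deforming the stabilizer group of $\mathcal{C}_2$ to that slab, using stabilizers supported nearby to project out the transverse dimension. The main obstacle, and where I expect most of the technical work to concentrate, is exactly this dimension reduction: one must define the reduced codes carefully so that each blob factor of $V$ is codespace-preserving and is implementable by a constant-depth geometrically local circuit on a $(D{-}1)$-dimensional lattice. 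Once the reduction is in place, the inductive hypothesis immediately yields $V \in \mathcal{P}_{D-1}$ as a logical operator on $\mathcal{C}_2$, completing the induction. The key auxiliary tools throughout are the union/cleaning lemma applied both globally (to clean $P$ off $R_0$) and on the slab (to show the reduced code is well-defined), together with the explicit constant-depth decomposition of $U$ used to localize the action of $V$ near $\partial R_0$.
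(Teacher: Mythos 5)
There is a genuine gap in your inductive step, and it is not the dimensional reduction you flag as the main obstacle but an earlier, false localization claim on which that reduction rests. After cleaning $P$ off $R_0$, you assert that $V = U P U^\dagger P^\dagger$ is trivial outside an $r$-neighborhood of $\partial R_0$ because ``$U P U^\dagger$ cancels $P^\dagger$'' deep inside the complement of $R_0$. This is false: the light cone only gives $\tmop{supp}(U P U^\dagger) \subseteq \mathfrak{B}(\tmop{supp}(P), s_U)$; it does not give $U P U^\dagger = P$ away from $\partial R_0$. Conjugation by a constant-depth circuit modifies $P$ throughout its support (already for a depth-one transversal $U = \bigotimes_x u_x$ the commutator $\bigotimes_x u_x P_x u_x^\dagger P_x^\dagger$ is generically nontrivial at every site of $\tmop{supp}(P)$). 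The containments that actually hold are $\tmop{supp}(V) \subseteq \mathfrak{B}(\tmop{supp}(P), s_U)$ and $\tmop{supp}(V) \subseteq \tmop{supp}(U)$; together they only remove a slightly shrunk copy of the cleaned region from the support. So $V$ lives on $R_1 \cup \cdots \cup R_D$ plus a collar, i.e.\ on the union of one fewer of the $D+1$ regions --- not on a $(D-1)$-dimensional shell --- and both the tensor factorization over blobs and the reduction to a $(D-1)$-dimensional code collapse. Even granting the localization, the reduction itself is a second unresolved gap: $V$ is a logical operator of the $D$-dimensional code $\mathcal{C}_2$, and there is no canonical restriction of $\mathcal{S}$ to a shell yielding a $(D-1)$-dimensional topological stabilizer code whose logical qubits and logical action can be identified with those of $V$ on $\mathcal{C}_2$.

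The fix is to abandon the induction on spatial dimension and instead induct on the number of correctable regions within the fixed code, which is what the paper does. Partition the lattice into $D+1$ regions $R_0, \ldots, R_D$, where $R_m$ is the union of fattened $m$-cells of a hypercubic tiling whose spacing is large compared with the stabilizer generator diameter; each connected component has constant size and is hence correctable, and the union lemma (lemma~\ref{lemma_union_stab}) makes each $R_m$ correctable. Then lemma~\ref{lemma:hierarchy} (or lemma~\ref{lemma:hierarchy2} for a circuit with finite spread) runs the induction entirely inside the original code: cleaning $P$ off $R_{m+1}$ gives a commutator supported on $R_0 \cup \cdots \cup R_m$ with spread at most $2 s_U$, which by the inductive hypothesis on the number of regions (not on the dimension) has logical action in $\mathcal{P}_m$, whence $[U]_L \in \mathcal{P}_{m+1}$. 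After $D$ iterations the nested commutator is supported on the single correctable region $R_0$ and is therefore logically trivial, which is the correct base case and is essentially the bounded-support argument you used for $D=1$.
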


Note that the theorem by Bravyi and K{\"o}nig deals with code deformations~\cite{Bombin09c} where logical transformations between two different codes $\mathcal{C}_1$ and $\mathcal{C}_2$ are also considered. In the present paper, we do not deal with code deformations for simplicity of discussion by assuming $\mathcal{C}_1 =\mathcal{C}_2$. Our arguments may be made applicable to the case for $\mathcal{C}_1\not=\mathcal{C}_2$ with a little effort.

\subsection{Gauge and logical qubits}

One important aim of the present paper is to generalize BK's result to topological subsystem codes. In this work, we will refer to subsystem codes to denote the Pauli stabilizer formalism, which provide a generalization of stabilizer QECCs to the context of operator quantum error correction formalism~\cite{Poulin05, Kribs06, Nielsen07}. Intuitively, a subsystem code is a stabilizer code defined by $\mathcal{S}$ where we encode quantum information into only a subset of the qubits in the stabilized subspace. Encoded qubits in this subset will be called logical qubit and the remaining qubits will be called gauge qubits (\emph{i.e.} the stabilized subspace may be decomposed into $\mathcal{H}_{\tmop{logical}} \otimes \mathcal{H}_{\tmop{gauge}} = \mathcal{C} ( \mathcal{S})$ as in Fig.~\ref{fig_subsystem}). 

A subsystem code is concisely defined by its gauge group $\mathcal{G} \subseteq \mathsf{Pauli}$ which may be non-abelian and contain $- 1$ (in contrast to the stabilizer group $\mathcal{S}$). 
The stabilizer subgroup $\mathcal{S}$ consists of centers of the gauge group $\mathcal{G}$ (i.e. elements of $\mathcal{G}$ that commute with all the elements in $\mathcal{G}$), and is defined as $\mathcal{S} = Z ( \mathcal{G})/\mathbbm{C}$, where signs are consistently chosen for the operators in the center $Z ( \mathcal{G})= \{ z \in \mathcal{G} : \forall g_{} \in \mathcal{G}, z g = g z \}$ such that $-1$ is not included in the group.
This leaves some freedom for $\mathcal{S}$ associated to the signs of its generators. 
The codespace of a subsystem code, denoted by $\mathcal{C}(\mathcal{S})$, is stabilized by $\mathcal{S}$, and logical qubits are encoded in a subsystem where gauge operators act trivially. The case $\mathcal{S} = \mathcal{G}$ corresponds to the special case of stabilizer codes.

One merit of subsystem codes is that the error recovery procedure may admit simpler realizations with measurements on fewer-body Pauli operators since it is not necessary to worry about errors affecting gauge qubits~\cite{Bacon06}. As such, one might expect that imposing locality on such codes could be less restrictive than doing so on stabilizer codes in terms of transversally implementable logical gates. However, the present work suggests that there is no significant advantage for subsystem codes.

\begin{figure}[htb!]
\centering
\includegraphics[width=0.85\linewidth]{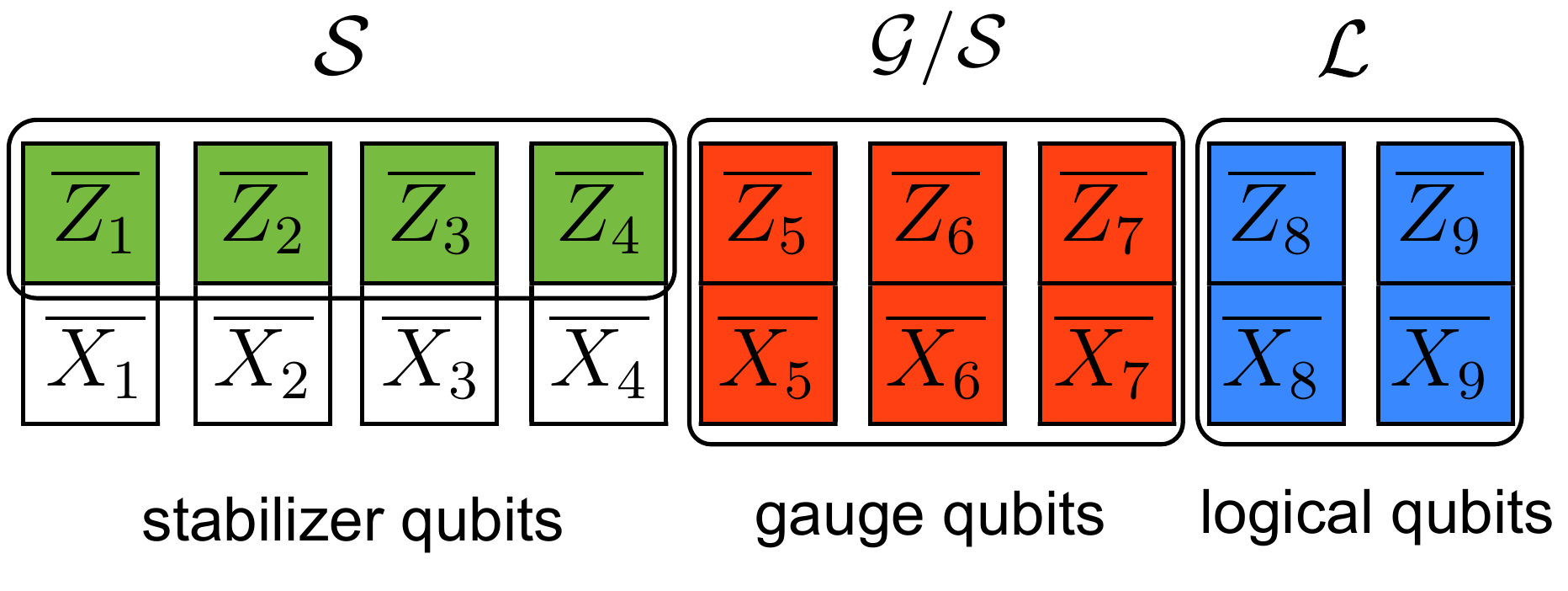}
\caption{The algebraic structure of the gauge group $\mathcal{G}$ and the stabilizer subgroup $\mathcal{S}$ in a subsystem code. The figure illustrates an example with $n=9$ qubits, three gauge qubits and two logical qubits. $\mathcal{G}/\mathcal{S}$ is the full Pauli algebra for the gauge qubits (red online). The stabilizer group $\mathcal{S}$ is generated by the $Z$ operators on the stabilized qubits (green online). 
The remaining qubits (blue online) represent the algebra for logical qubits $\mathcal{L}$. Generators of each group are brought to a canonical form via an appropriate Clifford transformation $U$ such that $\overline{X_{j}}=UX_{j}U^{\dagger}$ and $\overline{Z_{j}}=UZ_{j}U^{\dagger}$ for $j=1,\ldots,n$.
} 
\label{fig_subsystem}
\end{figure}

\subsection{Bare and dressed logical operators}

Logical operators preserve the codespace $\mathcal{C} ( \mathcal{S})$ and act non-trivially on logical qubits. In a subsystem code, there are two types of logical operators, called \emph{bare} and \emph{dressed} logical operators, depending on how they act on gauge qubits~\cite{Bravyi11}. Given a decomposition of the codespace as $\mathcal{C} ( \mathcal{S})=\mathcal{H}_{\tmop{logical}} \otimes \mathcal{H}_{\tmop{gauge}}$, bare logical operators act exclusively on logical qubits and act trivially on gauge qubits: $[U_{\tmop{bare}}]=[U]_{L} \otimes [I]_{G}$ where $[U]_{L}$ represents a logical action of $U_{\tmop{bare}}$ on logical qubits, and $[I]_{G}$ represents a trivial action on gauge qubits. Formally, bare Pauli logical operators are the centralizers of the gauge group $\mathcal{G}$ (i.e. Pauli operators that commute with all the elements of $\mathcal{G}$): ${\mathcal L}_{\tmop{bare}} = C ( \mathcal{G})= \{ z \in \textsf{Pauli}: \forall g \in \mathcal{G}, z g = g z \}$ denotes the centralizer of $\mathcal{A}$. Bare logical operators are identified up to stabilizer operators ${\mathcal S}$ since stabilizers act trivially both on gauge and logical qubits. 

The centralizer group $C ( \mathcal{G})$ consists only of bare Pauli logical operators in $\mathcal{P}_{1}$. Bare logical operators, beyond the Pauli group, are defined as follows:

\begin{definition}
A unitary operator $U$ is a bare logical operator of the subsystem code defined by the gauge group $\mathcal{G}$ (with associated stabilizer subgroup $\mathcal{S}$ and projector onto the code space $P_{\mathcal{C (\mathcal{S})}}$) iff
  \begin{equation}
    [ U, P_{\mathcal{C ( \mathcal{S})}}] = 0 \hspace{1em} \tmop{and}
    \hspace{1em}  [ U, G]
    P_{\mathcal{C ( \mathcal{S})}} = 0 \hspace{1em} \forall G \in \mathcal{G}.
  \end{equation}
\end{definition}

Dressed logical operators may act both on logical and gauge qubits non-trivially. Formally, dressed Pauli logical operators are the centralizer of $\mathcal{S}$: ${\mathcal L}_{\tmop{dressed}} = C( {\mathcal S})$: and its logical action has a form of $[U_{\tmop{dressed}}] = [U]_{L} \otimes [U]_{G}$. Dressed logical operators are identified up to gauge operators in $\mathcal{G}$ since gauge operators act trivially on logical qubits. Note that dressed Pauli operators decomposes into unitary actions on the logical and gauge qubits and have tensor product structures. 

A caution is needed in dealing with dressed logical operators beyond the Pauli group. We say an operator is a dressed logical unitary when its action decomposes into the logical and gauge qubits with tensor product structures. Note that, there exist unitary operators that preserve the stabilized subspace $C ( \mathcal{S})$, but do not have tensor product structures over $\mathcal{H}_{\tmop{logical}} \otimes \mathcal{H}_{\tmop{gauge}}$. We exclude such unitary operators since the unitary action on the logical qubits is dependent on the state of the gauge qubits. The premise of a subsystem code is that one does not worry about errors acting on gauge qubits, and thus such operators cannot be used to transform encoded logical states in a controlled way. 

Formally, dressed logical operators, the ones with tensor product structures in  $\mathcal{C} ( \mathcal{S})=\mathcal{H}_{\tmop{logical}} \otimes \mathcal{H}_{\tmop{gauge}}$, are defined as follows:

\begin{definition}
An operator $U$ is a dressed logical unitary on a subsystem code defined by the gauge group $\mathcal{G}$ and associated stabilizer subgroup $\mathcal{S}$ if and only if 
  \begin{equation}
  \frac{1}{| \mathcal{G} |} \sum_{G \in \mathcal{G}} G U \rho U^{\dagger}
  G^{\dagger} = \frac{1}{| \mathcal{G} |} \sum_{G \in \mathcal{G}} U G \rho
  G^{\dagger} U^{\dagger} .
\end{equation}
for all $\rho = P_{\mathcal{C ( \mathcal{S})}} \rho P_{\mathcal{C ( \mathcal{S})}}$.
\end{definition}

In other words, conjugation by $U$ commutes with depolarization with respect to the gauge group. Intuitively, the definition imposes that tracing out gauge qubits does not affect the action on logical qubits. 

These definitions of bare and dressed logical operators beyond the Pauli group are indeed algebraically well defined. For instance, a set of all the dressed logical operators form a closed group under multiplication. Furthermore, the set of bare logical operators is preserved under conjugation by dressed logical operators:

\begin{lemma}\label{lemma_closed}
Let $U_{d}$ be a dressed logical operator and $U_{b}$ be a bare logical operator for a subsystem code. Then $U_{d}U_{b}U_{d}^{\dagger}$ is also a bare logical operator.
\end{lemma}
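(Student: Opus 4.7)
The plan is to leverage the tensor product structure that dressed and bare logical unitaries exhibit on the code space. On $\mathcal{C}(\mathcal{S}) = \mathcal{H}_{\tmop{logical}} \otimes \mathcal{H}_{\tmop{gauge}}$, a dressed logical unitary decomposes as $U_{d}|_{\mathcal{C}(\mathcal{S})} = [U_{d}]_{L} \otimes [U_{d}]_{G}$ while a bare logical unitary decomposes as $U_{b}|_{\mathcal{C}(\mathcal{S})} = [U_{b}]_{L} \otimes I_{G}$. Conjugating gives $U_{d} U_{b} U_{d}^{\dagger}|_{\mathcal{C}(\mathcal{S})} = ([U_{d}]_{L}[U_{b}]_{L}[U_{d}]_{L}^{\dagger}) \otimes I_{G}$, and the trivial action on the gauge factor is precisely the algebraic signature of a bare operator. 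The proof then consists of translating this picture back into the two formal conditions in the definition of a bare logical operator given in the excerpt.

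The first condition, $[V, P_{\mathcal{C}(\mathcal{S})}] = 0$ where $V = U_{d} U_{b} U_{d}^{\dagger}$, requires establishing that $U_{d}$ itself preserves the code space. I would do this by plugging $\rho = P_{\mathcal{C}(\mathcal{S})}$ into the twirling identity defining dressed operators. Since $\mathcal{S} \subseteq Z(\mathcal{G})$, every $G \in \mathcal{G}$ commutes with $P_{\mathcal{C}(\mathcal{S})}$ and therefore $G P_{\mathcal{C}(\mathcal{S})} G^{\dagger} = P_{\mathcal{C}(\mathcal{S})}$, so the right-hand side of the twirling identity collapses to $U_{d} P_{\mathcal{C}(\mathcal{S})} U_{d}^{\dagger}$. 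The left-hand side is the gauge twirl of the same operator; decomposing with respect to the blocks $P_{\mathcal{C}(\mathcal{S})}$ and $I - P_{\mathcal{C}(\mathcal{S})}$ (each $G$ being block-diagonal in this splitting) forces the off-code components to vanish, giving $[U_{d}, P_{\mathcal{C}(\mathcal{S})}] = 0$. Combined with $[U_{b}, P_{\mathcal{C}(\mathcal{S})}] = 0$ from the bare definition, a short commutator shuffle yields $V P_{\mathcal{C}(\mathcal{S})} = U_{d} U_{b} P_{\mathcal{C}(\mathcal{S})} U_{d}^{\dagger} = U_{d} P_{\mathcal{C}(\mathcal{S})} U_{b} U_{d}^{\dagger} = P_{\mathcal{C}(\mathcal{S})} V$.

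For the second condition, $[V, G] P_{\mathcal{C}(\mathcal{S})} = 0$ for every $G \in \mathcal{G}$, the tensor-product picture does the heavy lifting. On the code space $V$ acts as $([U_{d}]_{L}[U_{b}]_{L}[U_{d}]_{L}^{\dagger}) \otimes I_{G}$ while $G$ acts as $I_{L} \otimes [G]_{G}$ (gauge operators act trivially on the logical factor by construction of the logical subsystem). These two tensor products commute trivially, and multiplying the resulting equality by $P_{\mathcal{C}(\mathcal{S})}$ on the right gives the required relation.

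The main obstacle is to justify the tensor-product form directly from the twirling definition of a dressed unitary rather than postulating it. I expect the cleanest route is to insert a small preparatory lemma stating that dressed logical unitaries normalize the gauge action on the code space: for every $G \in \mathcal{G}$ there exists $G' \in \mathcal{G}$ such that $U_{d} G P_{\mathcal{C}(\mathcal{S})} = G' U_{d} P_{\mathcal{C}(\mathcal{S})}$. This follows by applying the twirling definition to $\rho = G \rho_{0} G^{\dagger}$ for arbitrary code states $\rho_{0}$ and identifying the resulting gauge orbit. With this normalization, the verification of $[V, G] P_{\mathcal{C}(\mathcal{S})} = 0$ proceeds by pushing $G$ through $U_{d}^{\dagger}$, then through $U_{b}$ using the bare relation $[U_{b}, G''] P_{\mathcal{C}(\mathcal{S})} = 0$ for the conjugated element $G''$, and finally back through $U_{d}$, giving the result without any appeal to the intuitive tensor-product language.
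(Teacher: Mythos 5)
The step that fails is your preparatory lemma asserting that for every $G\in\mathcal{G}$ there exists $G'\in\mathcal{G}$ with $U_{d}GP_{\mathcal{C}(\mathcal{S})}=G'U_{d}P_{\mathcal{C}(\mathcal{S})}$. A dressed logical unitary may act on the gauge factor by an arbitrary, in particular non-Clifford, unitary $[U_{d}]_{G}$; since $\mathcal{G}$ restricted to the code space generates the full Pauli algebra of the gauge qubits, $[U_{d}]_{G}[G]_{G}[U_{d}]_{G}^{\dagger}$ need not be a Pauli, so no such $G'\in\mathcal{G}$ exists in general (take $[U_{d}]_{G}$ a $\pi/4$ phase gate on a gauge qubit and $[G]_{G}=X$). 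What is true, and what suffices, is only the weaker statement that $U_{d}GU_{d}^{\dagger}$ restricted to the code space still acts as the identity on the logical factor --- but that is precisely the tensor-product structure you set out to avoid postulating, so the ``rigorous'' route in your last paragraph does not close. The paper avoids conjugating individual gauge elements altogether: it inserts the group average $\frac{1}{|\mathcal{G}|}\sum_{G}GG^{\dagger}=\mathbbm{1}$, commutes each $G^{\dagger}$ past $U_{b}P_{\mathcal{C}(\mathcal{S})}$ using bareness of $U_{b}$, applies the twirling definition of $U_{d}$ once to exchange $U_{d}$ with the whole group average, and relabels the sum by $G\mapsto G_{0}^{-1}G$ to pull $G_{0}$ out on the left. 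You should either adopt that averaging argument or first derive the tensor-product form of $U_{d}|_{\mathcal{C}(\mathcal{S})}$ from the fact that the gauge twirl acts on code states as complete depolarization of the gauge factor; the elementwise normalization claim cannot be repaired as stated.

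Your derivation of $[U_{d},P_{\mathcal{C}(\mathcal{S})}]=0$ from the twirling identity also does not go through. Any Pauli operator $E$ --- including one that anticommutes with a stabilizer and therefore maps the code space onto an orthogonal subspace --- satisfies the twirling identity term by term, because $GE=\pm EG$ gives $GE\rho E^{\dagger}G^{\dagger}=EG\rho G^{\dagger}E^{\dagger}$ for every $G$. Hence the identity cannot force code-space preservation. Concretely, with $\rho=P_{\mathcal{C}(\mathcal{S})}$ it only tells you that $U_{d}P_{\mathcal{C}(\mathcal{S})}U_{d}^{\dagger}$ is a fixed point of the gauge twirl, and $EP_{\mathcal{C}(\mathcal{S})}E^{\dagger}$ is such a fixed point distinct from $P_{\mathcal{C}(\mathcal{S})}$, so your block-decomposition step cannot ``force the off-code components to vanish.'' Code-space preservation must instead be taken as part of what it means to be a logical operator, which is exactly what the paper does when it opens its proof with ``Each of $U_{d}$ and $U_{b}$ preserves the codespace.'' Once that is granted, your verification of the first condition is fine.
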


\begin{proof}
Each of $U_{d}$ and $U_{b}$ preserves the codespace $P_{\mathcal{C ( \mathcal{S})}}$ and so does their product. We will now prove that $U_{d}U_{b}U_{d}^{\dagger}$ commutes with any gauge operator $G_0 \in \mathcal{G}$ restricted to $P_{\mathcal{C ( \mathcal{S})}}$.
  \begin{eqnarray}
    U_d U_b U_d^{\dagger} G_0 P_{\mathcal{C ( \mathcal{S})}} & = & U_d
    \frac{1}{| \mathcal{G} |} \sum_{G \in \mathcal{G}} G G^{\dagger} U_b
    P_{\mathcal{C ( \mathcal{S})}} U_d^{\dagger} G_0 \nonumber\\
    & = & \frac{1}{| \mathcal{G} |} \sum_{G \in \mathcal{G}} U_d G U_b
    P_{\mathcal{C ( \mathcal{S})}} G^{\dagger} U_d^{\dagger} G_0  \nonumber\\
    & = & \frac{1}{| \mathcal{G} |} \sum_{G \in \mathcal{G}} G U_d U_b
    P_{\mathcal{C ( \mathcal{S})}} U_d^{\dagger} G^{\dagger} G_0  \nonumber\\
    & = & \frac{1}{| \mathcal{G} |} \sum_{G \in \mathcal{G}} G_0 G U_d U_b
    P_{\mathcal{C ( \mathcal{S})}} U_d^{\dagger} G^{\dagger} \nonumber
  \end{eqnarray}
We have multiplied by an identity, used the commutation of $P_{\mathcal{C (\mathcal{S})}}$ with all the operators. Then we use the definition of $U_d$ as a dressed logical operators. Finally we use that $G_0 G \in \mathcal{G}$ to relabel the sum. We conclude by reverting the previous steps, leaving $G_0$ as the leftmost factor.
\end{proof}

This lemma allows us to formally prove that dressed logical operators may transform logical states of logical qubits in a way independent of logical states of gauge qubits. 

\begin{lemma}\label{lemma_tensor}
Let $|\psi\rangle,|\psi'\rangle$ be an arbitrary pair of states in the codespace $\mathcal{P}_{C(\mathcal{S})}$ such that $\langle \psi | U_{b} | \psi \rangle = \langle \psi' | U_{b} | \psi' \rangle$ for all the bare logical operators $U_{b}$. Then, for any dressed logical operator $U_{d}$, one has $\langle \psi | U_{d}U_{b}U_{d}^{\dagger} | \psi \rangle = \langle \psi' | U_{d}U_{b}U_{d}^{\dagger} | \psi' \rangle$ for all $U_{b}$.
\end{lemma}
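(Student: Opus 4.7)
The plan is to reduce the statement directly to Lemma~\ref{lemma_closed}. The hypothesis says that $|\psi\rangle$ and $|\psi'\rangle$ agree on every bare logical expectation value, which intuitively expresses that they share the same logical state (though possibly with different gauge states). My aim is therefore to show that every operator of the form $U_d U_b U_d^\dagger$ is itself a bare logical operator, so that the desired equalities become immediate instances of the hypothesis applied to a specific bare operator.

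Concretely, fix an arbitrary bare logical operator $U_b$ and an arbitrary dressed logical operator $U_d$, and set $W := U_d U_b U_d^\dagger$. By Lemma~\ref{lemma_closed}, $W$ is itself a bare logical operator. Applying the hypothesis to $W$ then gives
\begin{equation*}
\langle \psi | W | \psi \rangle = \langle \psi' | W | \psi' \rangle,
\end{equation*}
which is precisely the claimed identity. Since $U_b$ was arbitrary, the conclusion holds for all bare $U_b$.

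The main (essentially the only) obstacle is conceptual rather than computational: one must recognize that the hypothesis quantifies over \emph{every} bare logical operator, and that conjugation by a dressed logical unitary does not take one outside the class of bare logical operators. This invariance is exactly the content of Lemma~\ref{lemma_closed}, and once it is in hand no further calculation is required. The resulting statement admits the natural interpretation that a dressed logical unitary acts identically on any two codewords representing the same logical state, independent of the accompanying gauge-state component, so that $U_d$ is a well-defined logical transformation at the level of the logical subsystem alone.
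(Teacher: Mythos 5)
Your proof is correct and is essentially identical to the paper's own argument: both define $W = U_d U_b U_d^\dagger$, invoke Lemma~\ref{lemma_closed} to conclude $W$ is a bare logical operator, and then apply the hypothesis to $W$. No gaps.
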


\begin{proof}
Lemma~\ref{lemma_closed} implies that $U_{b}'=U_{d}U_{b}U_{d}^{\dagger}$ is a bare logical operator. Then
  \begin{align}
    \langle \psi | U_{d}U_{b}U_{d}^{\dagger} | \psi \rangle 
    =& \langle \psi | U_{b}' | \psi \rangle\nonumber \\
    =& \langle \psi' | U_{b}' | \psi' \rangle\nonumber \\
    =& \langle \psi' | U_{d}U_{b}U_{d}^{\dagger} | \psi' \rangle. \nonumber
  \end{align}\label{lemma:tensor}
\end{proof}

\subsection{Cleaning lemma}

The notion of \emph{cleaning}, initially introduced for stabilizer codes~\cite{Bravyi09} also arises for subsystem codes. Let us begin by reviewing the cleaning procedure for stabilizer codes. Consider a logical Pauli operator $P \in \mathcal{L}$ that has non-trivial supports on some subset $R$ of qubits. Formally, the cleaning of a logical operator $P$ from the subset $R$ refers to a procedure of multiplying a logical Pauli operator $P$ by an operator $S \in \mathcal{S}$ to obtain an equivalent logical operator $PS$ that has a trivial action on $R$. The cleaning is not always possible, and can be performed if and only if there exists a stabilizer $S$ whose action on $R$ is identical to the action of $P$ on $R$: \emph{i.e.} $P|_{R}=S|_{R}$ up to a complex phase where $P|_{R},S|_{R}$ represent restrictions of $P,S$ onto a subset $R$.

The cleaning lemma by Bravyi and Terhal states that, if a subset $R$ supports no logical operator (except the one with trivial action), then any logical operator $P$ can be cleaned from $R$~\cite{Bravyi09}. Namely, there exists a stabilizer $S$ such that $PS$ is supported exclusively on qubits in the complementary subset $R^{c}$. A result in~\cite{Beni10} concisely relates the set of independent logical operators supported on a pair of complementary subsets of qubits. In particular, for any subset $R$ of qubits, one may define $l( R)$ to be the number of independent Pauli logical operators supported exclusively on $R$.

\begin{lemma}
Suppose a stabilizer code has $k$ logical qubits. Then $l ( R) + l( R^c) = 2 k$.\label{lemma:partition_stab}
\end{lemma}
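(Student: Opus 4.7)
My plan is to prove this by dimension counting in the symplectic $\mathbb{F}_2$-vector space associated with the Pauli group. I identify Paulis modulo phases on $n$ qubits with $V = \mathbb{F}_2^{2n}$, equipped with the symplectic form coming from commutation. Under this identification the stabilizer $\mathcal{S}$ is an isotropic subspace of dimension $n-k$, its centralizer $W = C(\mathcal{S}) = \mathcal{S}^{\perp}$ has dimension $n+k$, the subgroup $V_R$ of Paulis supported in $R$ has dimension $2|R|$, and subgroups on complementary regions satisfy $V_R^{\perp} = V_{R^c}$. With these identifications the quantity of interest is $l(R) = \dim(W \cap V_R) - \dim(\mathcal{S} \cap V_R)$, since inequivalent logical operators supported in $R$ are precisely elements of $W \cap V_R$ taken modulo stabilizers entirely supported in $R$.

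The central computation is to express $\dim(W \cap V_R)$ in terms of more tractable quantities. A Pauli $P \in V_R$ commutes with a stabilizer $S \in \mathcal{S}$ if and only if it commutes with the restriction $\pi_R(S)$ of $S$ to $R$, because the complementary part $\pi_{R^c}(S)$ acts on disjoint qubits and hence commutes with $P$ automatically. Consequently $W \cap V_R$ is the symplectic orthogonal complement of $\pi_R(\mathcal{S})$ taken inside $V_R$, and because the restricted symplectic form on $V_R$ is nondegenerate I get $\dim(W \cap V_R) = 2|R| - \dim \pi_R(\mathcal{S})$. Applying rank--nullity to the projection $\pi_R\vert_{\mathcal{S}}$ gives $\dim \pi_R(\mathcal{S}) = (n-k) - \dim(\mathcal{S} \cap V_{R^c})$, and combining these yields
\[
l(R) = 2|R| - (n-k) + \dim(\mathcal{S} \cap V_{R^c}) - \dim(\mathcal{S} \cap V_R).
\]

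Applying the same formula after swapping $R \leftrightarrow R^c$ gives an expression for $l(R^c)$ in which the two $\dim(\mathcal{S} \cap V_{\bullet})$ terms appear with signs opposite to those above. Adding the two formulas, those stabilizer-intersection terms cancel and I am left with $l(R) + l(R^c) = 2(|R| + |R^c|) - 2(n-k) = 2k$, as required. The only step that deserves any real care is the identification of $W \cap V_R$ with the symplectic complement of $\pi_R(\mathcal{S})$ inside $V_R$; this is where one must be careful that restricting to $R$ really does capture the full commutation data between operators on $R$ and the stabilizer group. Everything else is linear-algebra bookkeeping and the conclusion is quite robust to conventions.
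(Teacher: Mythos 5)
Your proof is correct. Note that the paper does not prove this lemma itself --- it states it as a known result imported from Ref.~\cite{Beni10} (Yoshida and Chuang) --- so there is no in-paper argument to compare against; your symplectic dimension count over $\mathbb{F}_2^{2n}$ is essentially the standard derivation given in that reference. The two load-bearing steps are both handled properly: the identification of $W\cap V_R$ with the symplectic complement of $\pi_R(\mathcal{S})$ inside $V_R$ (valid because the restricted form on $V_R$ is nondegenerate and the $R^c$-part of any stabilizer commutes automatically with operators on $R$), and the rank--nullity bookkeeping whose $\dim(\mathcal{S}\cap V_R)$ and $\dim(\mathcal{S}\cap V_{R^c})$ terms cancel in the sum $l(R)+l(R^c)$.
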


The cleaning lemma is recovered from this lemma by imposing that there is no logical operator supported on $R$: $l ( R)=0$ which leads to $l( R^c) = 2 k$. Since there are $2k$ independent Pauli logical operators for a stabilizer code with $k$ logical qubits, all the logical operators have representations that are supported exclusively on $R^c$. Thus the cleaning from the subset $R$ is always possible.

In the case of subsystem codes, multiplication by an element of $\mathcal{S}$ preserves bare logical operators, whereas multiplication by an element of $\mathcal{G}$ preserves dressed logical operators. A result due to Bravyi~\cite{Bravyi11} generalizes lemma~\ref{lemma:partition_stab} to relate the set of independent bare and dressed logical operators supported on complementary regions. In particular, we may define $l_{\tmop{dressed}}( R)$ and $l_{\tmop{bare}} ( R)$ to be the number of independent dressed and bare Pauli logical operators supported on $R$.

\begin{lemma}
Suppose a subsystem code has $k$ logical qubits. Then $ l_{\tmop{dressed}} ( R) + l_{\tmop{bare}} ( R^c) = 2 k$.\label{lemma:partition_sub}
\end{lemma}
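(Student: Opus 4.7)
The plan is to translate the entire counting problem into symplectic linear algebra over $\mathbb{F}_2$ and reduce it to a standard duality identity. Represent each Pauli operator modulo global phases as a vector in $V\cong\mathbb{F}_2^{2n}$ equipped with the symplectic form $\omega$ whose vanishing encodes commutation. Let $G,S\subseteq V$ be the images of $\mathcal{G}$ and $\mathcal{S}$, and let $V_R\subseteq V$ denote the subspace of vectors supported on the qubit subset $R$. Since Paulis on disjoint qubits commute and the single-qubit symplectic form is non-degenerate, $V_R^{\perp}=V_{R^c}$. The centralizers correspond to symplectic orthogonal complements, so $C(\mathcal{G})\leftrightarrow G^{\perp}$ and $C(\mathcal{S})\leftrightarrow S^{\perp}$, and the existence of exactly $2k$ independent logical Pauli operators translates to $\dim G^{\perp}-\dim S = 2k$.

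Next I would establish the linear-algebraic dictionary
\begin{align}
l_{\text{dressed}}(R) &= \dim(S^{\perp}\cap V_R)-\dim(G\cap V_R), \nonumber \\
l_{\text{bare}}(R^c) &= \dim(G^{\perp}\cap V_{R^c})-\dim(S\cap V_{R^c}). \nonumber
\end{align}
The subtracted terms encode that two dressed logicals on $R$ (resp.\ two bare logicals on $R^c$) represent the same logical action iff their ratio lies in $\mathcal{G}$ (resp.\ $\mathcal{S}$). The observation required here is that if two Paulis supported on $R$ differ by an element of $\mathcal{G}$, that element is itself supported on $R$, so quotienting inside $V_R$ by all of $\mathcal{G}$ coincides with quotienting by $G\cap V_R$; the analogous statement holds for the bare case on $R^c$.

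The combinatorial heart is then a single application of the symplectic identity $(A\cap B)^{\perp}=A^{\perp}+B^{\perp}$, valid for any subspaces in a non-degenerate symplectic space, together with $\dim(A+B)=\dim A+\dim B-\dim(A\cap B)$. Applied with $B=V_R$ and $A\in\{S^{\perp},G\}$, and using $V_R^{\perp}=V_{R^c}$, one obtains
\begin{align}
\dim(S^{\perp}\cap V_R) &= 2n-\dim S-\dim V_{R^c}+\dim(S\cap V_{R^c}), \nonumber \\
\dim(G\cap V_R) &= 2n-\dim G^{\perp}-\dim V_{R^c}+\dim(G^{\perp}\cap V_{R^c}). \nonumber
\end{align}
Substituting into the expression for $l_{\text{dressed}}(R)$ collapses the $2n$ and $\dim V_{R^c}$ terms and gives
\begin{equation*}
l_{\text{dressed}}(R) = (\dim G^{\perp}-\dim S)-\bigl[\dim(G^{\perp}\cap V_{R^c})-\dim(S\cap V_{R^c})\bigr] = 2k - l_{\text{bare}}(R^c),
\end{equation*}
which rearranges to the claim.

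The main obstacle is the bookkeeping in the second paragraph rather than the algebra: one must carefully verify that the informal notion of ``number of independent bare (or dressed) logical operators supported on a region'' really equals the dimension of the corresponding quotient space $(S^{\perp}\cap V_R)/(G\cap V_R)$ or $(G^{\perp}\cap V_{R^c})/(S\cap V_{R^c})$, including the subtlety that non-local gauge or stabilizer elements cannot secretly collapse these counts. Once this dictionary is in place, the remainder is a routine application of symplectic duality, and the only subsystem-specific input is the identity $\dim G^{\perp}-\dim S=2k$. Specializing to $\mathcal{G}=\mathcal{S}$ immediately recovers Lemma~\ref{lemma:partition_stab}, which is a useful sanity check on the formulas.
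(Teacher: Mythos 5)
Your proof is correct. The paper does not prove Lemma~\ref{lemma:partition_sub} itself but imports it from Bravyi~\cite{Bravyi11}, and your symplectic-duality argument --- translating the counts into $\dim(S^{\perp}\cap V_R)-\dim(G\cap V_R)$ and $\dim(G^{\perp}\cap V_{R^c})-\dim(S\cap V_{R^c})$, then applying $(A\cap B)^{\perp}=A^{\perp}+B^{\perp}$ together with $V_R^{\perp}=V_{R^c}$ --- is essentially the standard derivation given in that reference, and it correctly handles the one genuine subtlety, namely that ``independent'' must mean independent modulo $\mathcal{G}$ for dressed operators and modulo $\mathcal{S}$ for bare ones.
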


This lemma implies that, if there are no non-trivial dressed (bare) logical operators fully supported on $R$, all the bare (dressed) logical operators can be cleaned from a region $R$ so that they are supported exclusively on $R^{c}$. This leads to the following definition for bare (dressed)-cleanable regions.

\begin{definition}
  A region $R$ is \tmtextbf{bare (dressed)-cleanable}, if it supports no non-trivial dressed (bare) logical operators.\label{def:BareCleanable1}
\end{definition}

Cleanability is closely related to coding properties of the code. The code distance $d$ for a subsystem code is defined as the size of the smallest possible support for an operator in $\mathcal{L}_{\tmop{dressed}} \setminus \mathcal{G}$ (a dressed Pauli logical operator having non-trivial action on $\mathcal{H}_{\tmop{logical}}$). Furthermore, a subset $R$ of qubits is \emph{correctable} if and only if it supports no dressed logical operator. In other words, a subset $R$ is correctable if and only if $R$ is bare-cleanable. 

\subsection{Fault-tolerant logical gate and cleanability}

Let us now present a key technical lemma which plays a central role in deriving all the main results in the present paper.

\begin{lemma}\label{lemma:hierarchy}
Let $\{ R_j \}_{j \in [ 0, m]}$ be a set of regions where $R_0$ is bare-cleanable and each of the regions $\{ R_j \}_{j \in [ 1, m]}$ is dressed-cleanable in a subsystem code. If a dressed logical unitary $U$ is supported on the union $\bigcup_{j \in [ 0, m]} R_{j}$ and is transversal with respect to regions $R_{j}$, then the logical action $[U]_{L}$ of $U$ on the logical qubits correspond to an element of $\mathcal{P}_m$ (the $m$-th level of the Clifford hierarchy). 
\end{lemma}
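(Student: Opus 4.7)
The plan is to induct on $m$, using the commutator structure built into the Clifford-hierarchy definition (Definition~\ref{def:CliffordHierarchy}) together with the subsystem cleaning lemma (Lemma~\ref{lemma:partition_sub}).

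For the base case $m=0$, only the bare-cleanable region $R_0$ is present and $U$ is supported there. Bare-cleanability combined with Lemma~\ref{lemma:partition_sub} guarantees that every logical Pauli class has a \emph{bare} representative $\bar P'$ supported on $R_0^c$. Then $\bar P'$ commutes with $U$ by disjointness of supports; since $\bar P'$ acts as $[\bar P]_L\otimes I_G$ on $\mathcal{C}(\mathcal S)$ while $U$ has a well-defined logical action by Lemma~\ref{lemma:tensor}, one concludes that $[U]_L$ commutes with every logical Pauli, so $[U]_L$ is a scalar in $\mathcal P_0 = \mathbb{C}$.

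For the inductive step, assume the lemma at level $m-1$, and let $[\bar P]_L$ be an arbitrary logical Pauli. Since $R_m$ (any of the dressed-cleanable regions would do) is dressed-cleanable, Lemma~\ref{lemma:partition_sub} provides a \emph{dressed} Pauli representative $\bar P'$ of $[\bar P]_L$ supported on $R_m^c$; it has the same logical action as $\bar P$ because $\bar P'\bar P^{-1}\in\mathcal G$ acts trivially on logical qubits. Define $V := U\bar P' U^{\dagger}(\bar P')^{\dagger}$. Using the transversal decomposition $U = \bigotimes_{j=0}^m U_j$ and the fact that $\bar P'$ is trivial on $R_m$, the factor $U_m$ commutes with $\bar P'$ and cancels with $U_m^{\dagger}$. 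Decomposing $\bar P' = \bigotimes_{j\neq m}\bar P'_j$ (its part outside $\bigcup_j R_j$ also self-cancels), one obtains $V = \bigotimes_{j=0}^{m-1} U_j\bar P'_j U_j^{\dagger}(\bar P'_j)^{\dagger}$, which is transversal over the truncated family $\{R_0,R_1,\dots,R_{m-1}\}$. Because dressed logical unitaries are closed under multiplication (and dressed Pauli logicals are a special case of dressed logical unitaries, as a direct check of the depolarization condition shows), $V$ is again a dressed logical unitary. The truncated family still satisfies the hypotheses of the lemma at level $m-1$ (one bare-cleanable region $R_0$ and $m-1$ dressed-cleanable regions $R_1,\dots,R_{m-1}$), so the inductive hypothesis yields $[V]_L \in \mathcal P_{m-1}$. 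Since $[V]_L = [U]_L [\bar P]_L [U]_L^{\dagger} [\bar P]_L^{\dagger}$ and $[\bar P]_L$ was arbitrary, Definition~\ref{def:CliffordHierarchy} gives $[U]_L \in \mathcal P_m$, closing the induction.

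The main obstacle is the bookkeeping that turns the commutator $V$ into an unambiguously transversal dressed logical unitary on the truncated family of regions: one must verify that the dressed-cleaned representative $\bar P'$ preserves the logical class of $[\bar P]_L$, that the cancellation of the $U_m$ factor is exact and that the remaining factors reorganize into a genuine transversal product over $\{R_0,\dots,R_{m-1}\}$, and that $V$ inherits the dressed logical structure (using closure of dressed logical unitaries under products, together with Lemma~\ref{lemma_closed}). Once these subsystem-specific verifications are in place, the recursive definition of the Clifford hierarchy does the rest, and the argument mirrors Bravyi--K\"onig's stabilizer-code reasoning while accommodating the bare/dressed dichotomy.
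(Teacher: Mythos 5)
Your proposal is correct and follows essentially the same route as the paper's proof: induction on the number of regions, with the base case handled by cleaning bare Pauli representatives onto $R_0^c$, and the inductive step handled by cleaning a dressed Pauli representative off the last region so that the group commutator $U\bar P'U^{\dagger}(\bar P')^{\dagger}$ becomes transversal over the truncated family and the recursive definition of $\mathcal{P}_m$ applies. The extra bookkeeping you supply (exact cancellation of the $U_m$ factor, preservation of the logical class under dressed cleaning, closure of dressed logical unitaries under products) is exactly what the paper invokes, just stated more explicitly.
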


The above theorem does not require geometric locality of the gauge or stabilizer generators, and thus applies to arbitrary subsystem codes, nor does it require that the support to be the full set of qubits.

\begin{proof}
The proof proceeds by induction. Assuming $m = 0$, the operator $U$ is fully supported on a bare-cleanable region $R_0$. All the bare logical Pauli operators may be supported on ${R_0}^c$ hence they must commute with $U$. Thus, $[ U]_L$ must be a trivial logical operator in $\mathcal{P}_0$ (proportional to identity).
  
Let us now prove the inductive step. We assume that all the dressed transversal operators supported on the union $R_{0} \cup \bigcup_{j=1}^{m} R_{j}$ are in $\mathcal{P}_m$. Consider a transversal dressed logical operator $U$ such that
  \begin{equation}
    \tmop{supp} ( U) \subseteq R_0 \cup \bigcup_{j = 1}^{m + 1} R_j.
  \end{equation}
By definition, the logical action of $U$ has a tensor product form $[ U] = [ U]_L \otimes [ U]_G$ where $[U]_{L}, [U]_G$ denote the logical actions on the logical and gauge qubits respectively. Since $R_{m+1}$ is dressed-cleanable, all the dressed Pauli operators may be supported on ${R_{m+1}}^{c}$, and their logical actions $[P] = [P]_L \otimes [P]_G$ have a tensor product form due to lemma~\ref{lemma_tensor}. Hence, the group commutator $U P U^{\dagger}P^{\dagger}$ is also a dressed logical operator with a tensor product form with respect to the gauge and logical qubits. 
Furthermore, transversality of $U$ and $P$ with respect to $R_{j}$ mandates
  \begin{equation}
\tmop{supp} ( U P U^{\dagger} P^{\dagger}) \subseteq R_0 \cup \bigcup_{j = 1}^{m } R_j 
  \end{equation}
which implies $[ U P U^{\dagger}P^{\dagger}]_L \in \mathcal{P}_m$. By definition of the Clifford hierarchy, $[ U]_L \in \mathcal{P_{}}_{m+1}$.
\end{proof}

\section{Loss-tolerance and transversal logical gates}\label{sec:loss}

One implication that may be obtained at this point is a tradeoff between particle loss threshold and the set of achievable transversal gates. Indeed, we will see that increasing the first comes at the expense of restricting the second.

A highly desirable property for quantum error-correcting codes is that they must, with high probability, tolerate errors (such as depolarization) on a constant fraction $p_e$ of randomly chosen physical qubits. Here, $p_e$ is called an error threshold for a family of codes if,\ the probability of correcting independent and identically distributed errors, occurring with probability $p < p_e$, approaches to unity for members of the family with increasingly large number of physical qubits. 

An important form of errors is erasure errors which correspond to loss of physics qubits from the system. In addition to the fact that loss errors are unavoidable in realistic physical systems, the loss-tolerance is necessary for quantum error-correcting codes to have a finite error threshold. Namely, any form of depolarizing noise is more severe than qubit loss since, in the latter, full information on the location of errors is available. 
\footnote{Formally, erasure errors are modeled by extending the space associated to each physical particle with one additional state $| l \rangle$ which indicates loss of the corresponding particle. An error-correcting recovery map for loss errors may mimic the one for depolarizing noise by mapping all particles marked as $| l \rangle$ to the fixed-point of corresponding depolarizing channel. Hence, the loss threshold $p_l$ must necessarily be no smaller than any depolarization threshold $p_l \geq p_e$. }

The following corollary elucidates the existing tension between loss-threshold and the set of transversally implementable gates.

\begin{theorem}
Suppose we have a family of subsystem codes with a loss tolerance $p_l > 1 / n$ for some natural number $n$. Then, any transversally implementable logical gate must belong to $\mathcal{P_{}}_{n -  1}$.\label{Thm:SubsystemLossThreshold}
\end{theorem}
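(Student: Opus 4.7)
The plan is to reduce the claim to a single application of Lemma~\ref{lemma:hierarchy}, with the required regions taken to be a partition of the physical qubits into $n$ correctable parts of near-equal size. Fix a transversal logical gate $U=\bigotimes_{i=1}^{N} U_i$ on a member of the family with $N$ qubits, and choose $\epsilon>0$ with $1/n+\epsilon<p_l$. For every $N$ large enough that $1/n+1/N<p_l$, I would partition the qubits into $n$ subsets $R_1,\ldots,R_n$ of sizes $\lfloor N/n\rfloor$ or $\lceil N/n\rceil$, so that each has fractional size strictly below $p_l$.

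The central step is to argue that such a partition can be chosen with every $R_j$ simultaneously correctable. The threshold hypothesis states that iid erasure at rate $p<p_l$ produces a recoverable loss pattern with probability tending to one. Since the iid distribution concentrates on patterns of relative weight $\approx p$ and is uniform conditional on that weight, this forces a uniformly random subset of size $\lceil N/n\rceil$ to be correctable with probability tending to one. Sampling the partition itself uniformly at random makes each $R_j$ marginally such a subset, so a union bound over the $n$ (constantly many) regions yields a partition with all parts correctable with probability tending to one, and in particular one such partition exists for $N$ sufficiently large.

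Next, every correctable region is both bare- and dressed-cleanable. Correctability of $R$ is, by the identification recalled in the excerpt, the absence of any nontrivial dressed Pauli logical supported on $R$, which is precisely bare-cleanability. Since $\mathcal{S}\subseteq\mathcal{G}$ gives $C(\mathcal{G})\subseteq C(\mathcal{S})$, every bare Pauli logical is in particular a dressed Pauli logical, so the absence of dressed logicals on $R$ a fortiori forbids bare logicals on $R$, which is dressed-cleanability.

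Finally, apply Lemma~\ref{lemma:hierarchy} with $R_0:=R_1$ as the bare-cleanable region and $R_2,\ldots,R_n$ as the $m=n-1$ dressed-cleanable regions. A transversal unitary $U$ is automatically transversal with respect to this partition and has support in $\bigcup_{j=1}^{n} R_j$, the full qubit set, so the hypotheses of the lemma hold. The conclusion $[U]_L\in\mathcal{P}_{n-1}$ is the claim. The main obstacle is the central step: upgrading the probabilistic iid loss-threshold into the deterministic existence of a partition whose parts are all correctable; once this combinatorial lemma is in place, the remainder is a mechanical invocation of Lemma~\ref{lemma:hierarchy} with no further code-theoretic input.
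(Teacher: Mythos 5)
Your proposal is correct and follows essentially the same route as the paper: randomly partition the qubits into $n$ regions, use the loss threshold $p_l>1/n$ plus a union bound to conclude all $n$ parts are simultaneously correctable (hence both bare- and dressed-cleanable) for large enough codes, and then invoke Lemma~\ref{lemma:hierarchy} with one part as $R_0$ and the remaining $n-1$ as dressed-cleanable regions to get $[U]_L\in\mathcal{P}_{n-1}$. The paper's version assigns each qubit independently and uniformly to one of the $n$ regions (so each region is marginally an iid erasure pattern at rate $1/n<p_l$, making the correctability claim immediate from the threshold definition), whereas your fixed-size partition needs the slightly more delicate conditional-uniformity step you describe; both are fine.
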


\begin{proof}
Suppose $p_l > 1 / n$, and assign each qubit to one of $n$ regions $\{ R_j \}_{j \in [ 0, n-1]}$ uniformly at random. Each of the regions chosen this way will be correctable with a probability which is arbitrarily close to unity as we take larger codes from the family. Finally, we may conclude  by applying lemma~\ref{lemma:hierarchy}  to the $n$ correctable regions obtained in this way, which are both bare and dressed cleanable.
\end{proof}

The above result applies to arbitrary stabilizer and subsystem codes, and is \emph{not} restricted to codes with geometrically local generators. 

\begin{example}\emph{
The toric code saturates the bound of theorem~\ref{Thm:SubsystemLossThreshold} in that it has a loss threshold of $p_l = 1 / 2 > 1 / 3$ and can still transversely implement some logical operators in $\mathcal{P}_2$ (such as $\tmop{CNOT}$)~\cite{Stace09}.
  }
\end{example}

\begin{example}\emph{
Reed-Muller code $[ [ 2^m - 1 , 1, 3]]$ admits transversal implementation of $\pi/2^{m-1}$ phase shift which belongs to $\mathcal{P}_m$~\cite{Nielsen_Chuang}. As a family of codes with increasing $m$, it must have a zero loss threshold.
  }
\end{example}

\begin{example}\emph{
$D$-dimensional topological color code admits transversal implementation of gates in $\mathcal{P}_D$ but not of gates in ${\mathcal{P}_{}}_{D + 1}$. Its loss threshold is hence upper bounded by $1 / D$. This conclusion may likely be recovered by other arguments related to percolation in $D$-dimensional lattices.
  }
\end{example}

\section{Constant depth circuits and geometric locality}\label{sec:subsystem}

Discussions so far do not rely on geometric locality of required generators in the code, which is one of the most important features to assess its experimental feasibility. The underlying assumption of geometric locality is that physical qubits are associated to particles on a regular lattice and check operators involve only particles within a constant sized neighborhood. More precisely, the gauge group $\mathcal{G}$ may be generated by a set of Pauli operators, each one having support restricted to a ball of diameter $\xi = O ( 1)$. In this section, we generalize BK's result to topological subsystem codes that are supported on a $D$-dimensional lattice with geometrically local generators.

\subsection{Union lemma} 

A challenge in generalizing BK's result is that the so-called \emph{union lemma} does apply to topological subsystem codes. The union lemma for a topological stabilizer code states that \emph{the union of two spatially disjoint cleanable regions is also cleanable}. Here two regions are spatially disjoint if local stabilizer generators overlap with at most one of the regions. 

\begin{lemma}\emph{\tb{[Union lemma (stabilizer code)]}}
\label{lemma_union_stab}
For a topological stabilizer code, let $R_1$ and $R_2$ be two spatially disjoint regions such that there exists a complete set of stabilizer group generators $\{ S_{j} \}$ each intersecting at most one of $\{ R_{1}, R_{2} \}$. If $R_1$ and $R_2$ are cleanable, then the union $R_1 \cup R_2$ is also cleanable.
\end{lemma}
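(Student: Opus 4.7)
The plan is to show that any Pauli logical operator $L$ can be cleaned from $R_1 \cup R_2$ by exhibiting a single stabilizer $S \in \mathcal{S}$ such that $LS$ has trivial support on $R_1 \cup R_2$. First I would apply cleanability of $R_1$ to find a stabilizer $S_1$ with $L_1 := L S_1$ having empty support on $R_1$. The remaining task is to clean $L_1$ from $R_2$ by some stabilizer $S_2$ that, crucially, does not reintroduce support on $R_1$; this is where the spatial-disjointness hypothesis must be exploited.

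To realize this, I would partition the hypothesized local generating set into three disjoint families according to which of $R_1, R_2$ they intersect: $\mathcal{S}_{0}$ (touching neither), $\mathcal{S}_{1}$ (touching only $R_1$), and $\mathcal{S}_{2}$ (touching only $R_2$); no generator touches both by assumption. Because $R_2$ is cleanable, there exists some stabilizer $S'$ such that $L_1 S'$ is trivial on $R_2$. Writing $S' = A \cdot B \cdot C$ as a grouped product in which $A, B, C$ collect the generator factors drawn from $\mathcal{S}_{0}, \mathcal{S}_{1}, \mathcal{S}_{2}$ respectively — valid since stabilizer generators commute — I would then define $S_2 := A \cdot C$, discarding the $\mathcal{S}_{1}$ factors which are the only ones capable of touching $R_1$.

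The verification then splits in two. Since every generator comprising $A$ and $C$ is disjoint from $R_1$, the stabilizer $S_2$ has empty support on $R_1$, so $L_1 S_2$ retains trivial support on $R_1$. For the restriction to $R_2$, I would note that $B$ is a product of generators in $\mathcal{S}_{1}$, none of which touch $R_2$, so $B|_{R_2}$ is a scalar and the restrictions $S'|_{R_2}$ and $S_2|_{R_2}$ coincide up to phase. Hence $L_1 S_2$ is trivial on $R_2$ exactly as $L_1 S'$ was, and $S := S_1 S_2$ cleans $L$ from $R_1 \cup R_2$. Since $L$ was arbitrary, this establishes cleanability of the union.

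The main obstacle I anticipate is bookkeeping signs and ensuring that each intermediate factor $A, C$ (and in particular $S_2 = A \cdot C$) remains a genuine element of $\mathcal{S}$ rather than merely a Pauli operator up to phase; this follows because $\mathcal{S}$ is a group and each $\mathcal{S}_{i}$ is contained in $\mathcal{S}$. A more conceptual subtlety is that the spatial-disjointness hypothesis on the generating set is strictly necessary for the partitioning step: if a single generator touched both $R_1$ and $R_2$, dropping it to avoid $R_1$ would also change the restriction to $R_2$, and the cleaning argument would collapse. This is precisely the reason the corresponding statement fails in general for subsystem codes, where no such locally generating set for the full stabilizer subgroup need exist.
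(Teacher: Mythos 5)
Your proof is correct. Note first that the paper itself states this lemma without proof (it is imported from Bravyi--Terhal), so the comparison is against the standard argument implicit in the paper's setup. The paper's working definition of cleanable (Definition~\ref{def:BareCleanable1}, specialized to stabilizer codes) is ``supports no non-trivial logical operator,'' and the standard route is the contrapositive one: take a putative logical Pauli $P$ supported on $R_1\cup R_2$, split it as $P=P_1P_2$ with $P_i=P|_{R_i}$, observe that each local generator overlaps at most one $R_i$ so that commutation of $P$ with every generator forces each $P_i$ separately to commute with every generator, conclude each $P_i$ is itself a logical operator supported on the cleanable region $R_i$ and hence a stabilizer, so $P$ is trivial. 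You instead work with the dual characterization --- ``every logical operator can be moved off the region'' --- and explicitly construct the cleaning stabilizer $S_1S_2$, using the spatial-disjointness hypothesis to surgically remove from $S'$ the generator factors that would re-contaminate $R_1$ while preserving its restriction to $R_2$. Both arguments are sound and use the hypothesis on the generating set in essentially the same place; the two characterizations of cleanability are interchangeable via Lemma~\ref{lemma:partition_stab} ($l(R)+l(R^c)=2k$), which you implicitly invoke both in the hypotheses (to pass from $l(R_i)=0$ to the existence of $S_1$ and $S'$) and in the conclusion (to pass from $l\bigl((R_1\cup R_2)^c\bigr)=2k$ back to $l(R_1\cup R_2)=0$); it would be worth making that one-line reduction explicit. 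A minor advantage of your version is that it is constructive and makes transparent exactly why the argument breaks for subsystem codes, where the stabilizer subgroup need not admit a geometrically local generating set to partition; the standard version is shorter and generalizes directly to the dressed-cleanable union lemma (Lemma~\ref{lemma_union_sub}) because the gauge group there does have local generators.
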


At this point, let us review the derivation of BK's result in order to illustrate the use of the union lemma. For a topological stabilizer code with a growing code distance, one is able to split the $D$-dimensional space into $D+1$ regions $R_{m}$ for $m=0,\ldots,D$ where $R_{m}$ consists of small regions with constant size connected components which are spatially disjoint. Let us demonstrate it for $D=2$ (see Fig.~\ref{fig_union}). We first split the entire lattice into patches of square tiles so that the diameter of local stabilizer generators is much shorter than the spacing of tiles. This square tiling has three geometric object; points, lines and faces. First, we ``fatten'' points to create regions $R_{0}$. We then fatten lines and create regions $R_{1}$. The remaining regions are identified to be $R_{2}$. Therefore $R_{m}$ is the union of fattened $m$-dimensional objects. For a $D$-dimensional lattice, we start with a $D$-dimensional hyper-cubic tiling and fatten $m$-dimensional objects to obtain $R_{m}$ for $m=0,\ldots,D$.

Each of connected components in $R_{m}$ is cleanable as the code distance is growing with the system size $n$. Also connected components in $R_{m}$ are spatially disjoint. Due to the union lemma, the union of spatially disjoint small regions is correctable, and thus $R_{m}$ is correctable. Then lemma~\ref{lemma:hierarchy} implies that transversally implementable logical gates are restricted to $\mathcal{P}_{D}$, recovering BK's result (Theorem~\ref{Thm:BK}). 

\begin{figure}[htb!]
\centering
\includegraphics[width=0.60\linewidth]{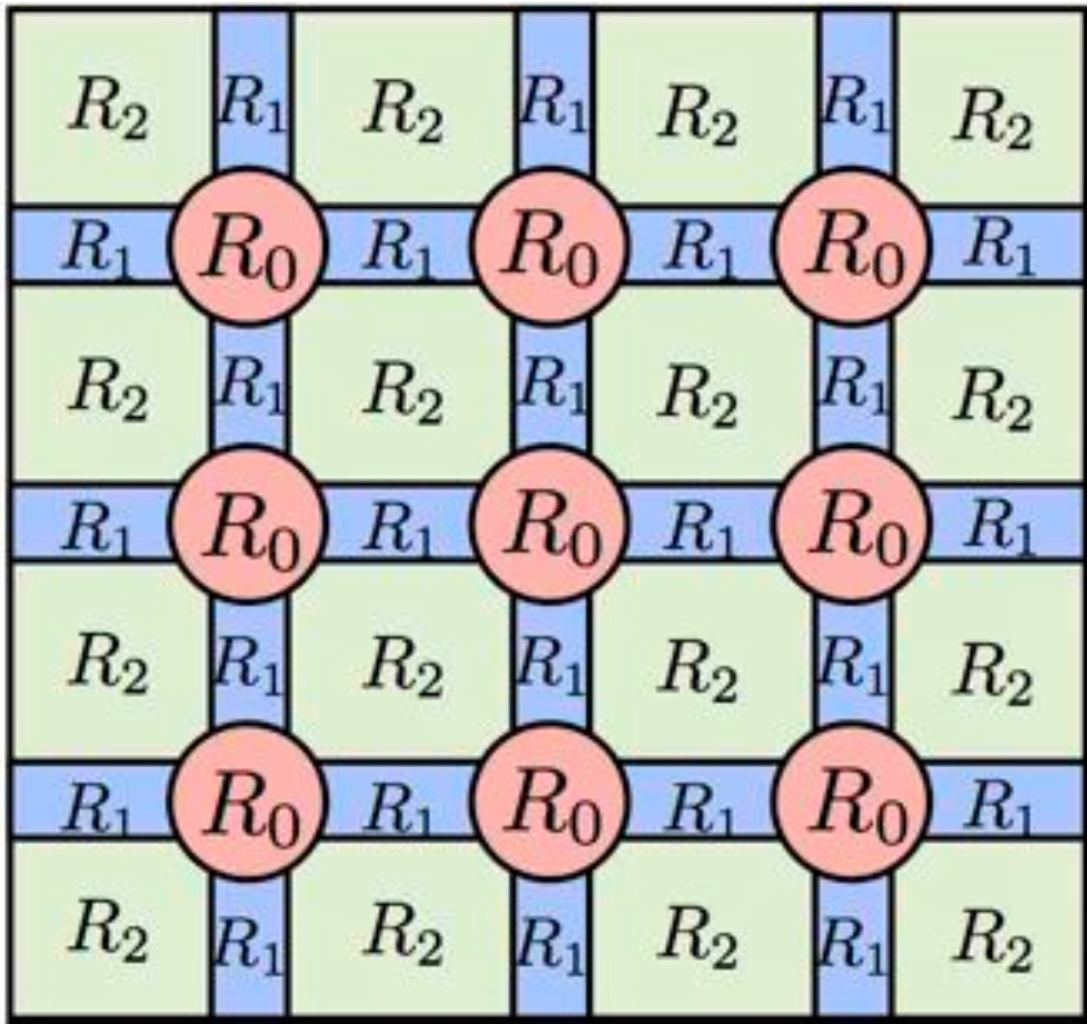}
\caption{The partition of a two-dimensional lattice into three regions $R_{0},R_{1},R_{2}$ which consist of smaller regions that are correctable and spatially disjoint. 
} 
\label{fig_union}
\end{figure}

For a topological subsystem code, two regions are said to be spatially disjoint if local gauge generators may overlap with at most one of the regions. Unlike a topological stabilizer code, however, geometric locality of stabilizer generators is not always guaranteed since the stabilizer subgroup $\mathcal{S}$ is defined to be the center of the gauge group $\mathcal{G}$, and generators of $\mathcal{S}$ are,  in general, products of multiple local gauge generators. 
As such, the union lemma holds only for dressed-cleanable regions as summarized below.

\begin{lemma}\emph{\tb{[Union lemma (subsystem code)]}}
\label{lemma_union_sub}
For a topological subsystem code, let $R_1$ and $R_2$ be two spatially disjoint regions such that there exists a complete set of gauge group generators $\{ G_{j} \}$ each intersecting at most one of $\{ R_{1}, R_{2} \}$. If $R_1$ and $R_2$ are dressed-cleanable, then the union $R_1 \cup R_2$ is also dressed-cleanable.
\end{lemma}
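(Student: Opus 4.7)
The plan is to directly show that any Pauli bare logical operator $P$ supported on $R_1 \cup R_2$ must be trivial, i.e., lie in $\mathcal{S}$. By the partition identity for subsystem codes (Lemma~\ref{lemma:partition_sub}), dressed-cleanability of a region is equivalent to the absence of non-trivial bare Pauli logical operators supported on that region, so it suffices to rule out such Paulis on the union.

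Since $R_1$ and $R_2$ are disjoint, factorize $P = P_1 P_2$, where $P_i$ is the restriction of $P$ to $R_i$ and is itself a Pauli operator. The key step is to argue that each $P_i$ individually lies in $C(\mathcal{G})$. Pick any generator $G_j$ from the complete local gauge generating set guaranteed by the spatial-disjointness hypothesis. If $\tmop{supp}(G_j)$ is disjoint from both $R_1$ and $R_2$, it commutes trivially with each $P_i$. Otherwise $G_j$ overlaps only one region, say $R_1$; then $G_j$ and $P_2$ have disjoint supports and commute automatically, so the relation $[G_j, P] = 0$, which holds because $P \in C(\mathcal{G})$, forces $[G_j, P_1] = 0$ as well. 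Running this argument across all gauge generators in the chosen set yields $P_1 \in C(\mathcal{G})$, and symmetrically $P_2 \in C(\mathcal{G})$.

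Now invoke the dressed-cleanability of each $R_i$: the region supports no non-trivial bare Pauli logical operator, so $P_i \in \mathcal{S}$. Therefore $P = P_1 P_2 \in \mathcal{S}$, which means $P$ acts trivially on the logical subsystem. This establishes that $R_1 \cup R_2$ supports no non-trivial bare Pauli logical operator, which is precisely dressed-cleanability.

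The essential role of the spatial-disjointness hypothesis is to decouple the commutation conditions between $P$ and the gauge group across $R_1$ and $R_2$; this is the only place where the assumption is used, and also the main subtlety. If a single gauge generator straddled both regions, its commutator with $P$ could vanish through cancellation between the contributions of $P_1$ and $P_2$, and there would be no reason for $P_1$ alone to lie in $C(\mathcal{G})$. This also clarifies why the subsystem version only controls dressed-cleanability and not bare-cleanability: an analogous argument would require a local generating set for the stabilizer subgroup $\mathcal{S}$ with the same disjointness property, but in a generic subsystem code $\mathcal{S}$ consists of non-local products of gauge operators and no such generating set need exist.
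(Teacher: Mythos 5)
Your proof is correct and is the canonical argument: restrict the bare logical Pauli to each region, use the fact that every local gauge generator sees at most one region (plus bilinearity of Pauli commutation) to conclude each restriction is itself in $C(\mathcal{G})$, and then invoke dressed-cleanability of each piece to force both restrictions, and hence their product, into $\mathcal{S}$. The paper states this lemma without proof (deferring to the literature), and your argument — including the closing remark on why the same strategy fails for bare-cleanability when $\mathcal{S}$ lacks local generators — matches the standard one.
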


It is worth emphasizing that the union lemma for bare-cleanable regions are recovered for a topological subsystem code if its stabilizer subgroup admits a complete set of geometrically local generators. This is the case for Bombin's gauge color code is a three-dimensional subsystem code. 

\subsection{Fault-tolerance and non-local stabilizer generators}

In addition to the technical difficulty, the breakdown of the union lemma seems to taint fault-tolerance of a subsystem code. Emergence of geometrically non-local stabilizer generators prevents us from having the union lemma for bare-cleanable regions. Indeed, this is the case for two and three-dimensional quantum compass models~\cite{Dorier05, Bacon06}. We should yet mention that geometrically non-local stabilizer generators are hard to measure reliably and hence undesirable for physical realizations. Namely, when non-local stabilizer generators are supported by a large number of physical qubits, their measurements cannot be performed fault-tolerantly. 

Macroscopic code distance $d$ is no doubt a necessary requirement for a family of codes to provide reliable error-correcting properties. Formally, it imposes that the code distance $d$ may be made arbitrarily large by increasing the code block size $n$. Macroscopic code distance $d$ is necessary for the probability of failing the error-correcting procedures to be small. Namely, the failure probability is lower bounded by $p_{\tmop{fail}}\geq O(p^{d})$ where $p$ is the physical error rate. In order for $p_{\tmop{fail}}$ to converge to zero for sufficiently small but finite $p$, the distance $d$ must be macroscopic. 
It is highly desirable to have a code with exponentially vanishing $p_{\tmop{fail}}$. 
Polynomially decaying $p_{\tmop{fail}}$ may be still acceptable, which imposes that the distance $d$ must grow at least logarithmically. 
However, sub-polynomially vanishing $p_{\tmop{fail}}$ would be impractical. 
As such we may assume that a code distance grows at least logarithmically in a fault-tolerant code.

Macroscopic code distance is necessary, but not sufficient for a finite error threshold in the code. For a stabilizer code with low-weight generators, a macroscopic code distance guarantees a finite error threshold against depolarization. Namely, if the code distance grows logarithmically with $n$, a finite error threshold is guaranteed as proven by Kovalev and Pryadko~\cite{Kovalev13}.

\begin{theorem}\emph{\tb{[Error threshold  \cite{Kovalev13}]}}
Consider a family of stabilizer codes whose stabilizer group generators have constant weight. If the code distance grows logarithmically in the system size $n$, a finite error threshold always exists such that the recovery failure probability $p_{fail}$ approaches to zero as $n$ increases. 
\end{theorem}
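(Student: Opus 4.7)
The plan is to adapt the standard percolation-style threshold template to this LDPC quantum setting. First I would fix a complete set of independent stabilizer generators $\{S_{i}\}$ with $\max_{i}|\tmop{supp}(S_{i})| \leq w$, and consider the independent depolarizing channel in which each physical qubit suffers a uniformly random non-trivial Pauli with probability $p$. The decoder under consideration is the minimum-weight recovery: given the observed syndrome $s$, output any Pauli $R$ of minimum weight consistent with $s$; the decoder fails precisely when $ER$ is a non-trivial logical operator, where $E$ is the actual error. Since $|R|\leq|E|$ by minimality, a failure event forces $|E|\geq d/2$, but this by itself is far too weak.

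The crux is a counting estimate. Call a weight-$t$ Pauli $F$ a \emph{minimal uncorrectable configuration} if $F$ lies in a non-trivial logical coset and cannot be written as a disjoint product of two Paulis each lying in that same coset modulo stabilizers --- intuitively, an irreducible carrier of a possible decoding failure. One argues that the failure event implies that $E$ contains such an $F$ of weight $t \geq d/2$ as a sub-pattern. Using the bounded generator weight to control the Tanner-graph degree for a canonical choice of generators, one then bounds the number of minimal uncorrectable configurations of weight $t$ by $n c^{t}$, where $c$ depends only on $w$. This is the key LDPC input, analogous to the Peierls counting of self-avoiding clusters in bounded-degree graphs.

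Combining independence of errors with this counting bound gives
\begin{equation}
p_{\tmop{fail}} \;\leq\; \sum_{t \geq d/2} n\, (cp)^{t} \;=\; \frac{n\,(cp)^{\lceil d/2 \rceil}}{1-cp},
\end{equation}
valid whenever $p<p_{e}:=1/c$. If $d \geq \alpha \log n$ with $\alpha > 2/\log(1/(cp))$, the right-hand side tends to zero, simultaneously establishing a finite threshold $p_{e}$ and the desired convergence $p_{\tmop{fail}}(n)\to 0$ as $n \to \infty$.

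The main obstacle, I expect, is the counting step rather than the geometric-series tail. Bounded generator weight does not by itself imply bounded qubit degree in the Tanner graph, so the cluster-expansion estimate is not automatic. One must either additionally assume the full LDPC property (bounded degree on both sides of the Tanner graph) or refine the argument by decomposing uncorrectable configurations into geometrically disjoint irreducible pieces whose sizes can be controlled separately. Kovalev and Pryadko resolve this by bounding the number of irreducible syndrome patterns using a dedicated enumeration technique, which I would invoke to close the argument in a complete write-up.
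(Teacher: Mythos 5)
You should first be aware that the paper offers no proof of this statement at all: it is imported verbatim from Kovalev and Pryadko \cite{Kovalev13} and used as a black box in the subsequent discussion of subsystem codes, so there is no internal proof to compare your argument against. Judged on its own terms, your sketch follows the same strategy as the cited reference (decompose the residual operator $ER$ into irreducible undetectable pieces with disjoint supports, isolate a non-trivially logical one, and union-bound over irreducible configurations with a Peierls-type count against the independence of the noise). But it is not a proof: the entire difficulty is concentrated in the counting bound $nc^{t}$, which you assert and then, in your closing paragraph, propose to justify by invoking Kovalev and Pryadko --- that is, by citing the very theorem being proved. As a self-contained argument it therefore has a genuine gap, located exactly where you say it is.

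Two further points. The obstacle you flag is real and is in fact a looseness in the paper's paraphrase of the result: the Peierls enumeration of connected clusters requires bounded degree in the graph whose vertices are qubits and whose edges join qubits appearing in a common generator, and this needs \emph{both} bounded generator weight and a bounded number of generators acting on each qubit. With constant generator weight alone a single qubit may neighbor $\Omega(n)$ others, and the count of connected weight-$t$ configurations through a fixed qubit is no longer $c^{t}$; the hypothesis in \cite{Kovalev13} is the full LDPC property, and any complete write-up must assume it. Separately, your reduction ``failure implies $E$ contains a minimal uncorrectable $F$ of weight $t\geq d/2$'' is better replaced by the standard and sharper statement that failure of minimum-weight decoding forces $E$ to cover at least half of the support of some irreducible codeword of weight $t\geq d$, giving a per-configuration probability $\binom{t}{\lceil t/2\rceil}p^{t/2}\leq (2\sqrt{p})^{t}$; and the quantifiers in your final step should be reversed: for a family with $d\geq\alpha\log n$ at fixed $\alpha$, one chooses $p$ small enough that $(2\sqrt{cp})^{\alpha\log n}$ beats the polynomial prefactor, yielding a finite threshold depending on $\alpha$ and $w$, rather than choosing $\alpha$ large given $p$. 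None of these repairs changes the architecture of your argument, but without the counting lemma proved from the LDPC hypothesis the proposal does not stand on its own.
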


Yet this theorem does not apply to all subsystem codes, and applies only to subsystem codes with low-weight stabilizer generators. Indeed, two and three-dimensional quantum compass models have a zero error threshold, and thus are not scalable quantum error-correcting codes~\cite{Pastawski09}. To our knowledge, there is no known relation between the existence (or absence) of a finite error threshold and locality (of non-locality) of stabilizer generators. It seems plausible that such a relation could exist.

BK's derivation relies on a macroscopic code distance, which is required for a finite error threshold. In the present work, we use the fault-tolerance itself as the guiding principle. Namely, we will assume that (i) the code distance grows at least logarithmically, and (ii) the code has a finite (loss) error threshold.

\subsection{Bravyi-K{\"o}nig for subsystem code}

Let us now proceed to generalization of BK's result to a topological subsystem code. The distance $d(,)$ between particles on the lattice will be used to define the $r$-neighborhood $\mathfrak{B} ( R, r)$ of a region $R$ which includes region $R$ and all particles within distance $r$ to it. Furthermore, we define the spread $s_U$ of a unitary as the smallest possible distance such that $\forall A: \tmop{supp}(UAU^\dagger) \subseteq \mathfrak{B} ( \tmop{supp}(A), s_U)$. In particular, if $U$ is implemented by a constant depth circuits composed of geometrically local gates, the spread $s_U$ will also be bounded by a constant.

A version of lemma~\ref{lemma:hierarchy} involving the lattice geometry can now be stated.

\begin{lemma}\label{lemma:hierarchy2}
Let $U$ be a dressed logical unitary operator supported on the union of mutually non-intersecting regions $R_{0}$ and $\{  R_j \}_{j \in [ 1, m]}$. If $R_{0}$ is bare-cleanable and each $R_j^+ \assign\mathfrak{B} ( R_j, 2^{j-1}s_U)$ is dressed-cleanable for $j>0$, then the logical unitary implemented by $U$ belongs to $\mathcal{P}_{m}$.
\end{lemma}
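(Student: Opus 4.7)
The plan is to peel off the regions $R_1, R_2, \ldots, R_m$ one at a time by iterated group commutators of $U$ against suitably cleaned dressed Pauli logicals, mirroring the transversal argument of lemma~\ref{lemma:hierarchy} while tracking the doubling of the spread induced by each commutator. Fix arbitrary logical Paulis $\bar P_1, \ldots, \bar P_m$. Since each $R_k^+$ is dressed-cleanable (hence supports no bare Pauli logicals), lemma~\ref{lemma:partition_sub} applied to $R = (R_k^+)^c$ guarantees that every dressed Pauli logical admits a representative supported in $(R_k^+)^c$; choose such a representative $P_k$ of $\bar P_k$, noting that the cleaning amounts to multiplying by a gauge element, which leaves the logical action $\bar P_k$ intact. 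Define $W^{(0)} = U$ and recursively $W^{(k)} = W^{(k-1)} P_k (W^{(k-1)})^\dagger P_k^\dagger$. Since the dressed logical unitaries form a group, each $W^{(k)}$ is again a dressed logical unitary, and its logical action is the nested commutator $[W^{(k)}]_L = [[W^{(k-1)}]_L, \bar P_k]$.

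Two elementary geometric facts drive the support reduction. First, the spread at most doubles per commutator: $s_{W^{(k)}} \leq 2 s_{W^{(k-1)}} \leq 2^k s_U$, because conjugation by a spread-zero Pauli is free while the two occurrences of $W^{(k-1)}$ each contribute $s_{W^{(k-1)}}$. Second, for any Pauli $P$ and any unitary $A$ one has $\mathrm{supp}([A,P]) \subseteq \mathrm{supp}(A)$, verifiable site-by-site: on $x \notin \mathrm{supp}(A)$ the Pauli factor $P_x$ cancels against $P_x^\dagger$ so that $[A,P]$ acts trivially at $x$. Combined with the standard spread inclusion $\mathrm{supp}(W^{(k)}) \subseteq \mathfrak{B}(\mathrm{supp}(P_k), s_{W^{(k-1)}})$ and the observation that $\mathrm{supp}(P_k)$ lies at distance strictly greater than $2^{k-1}s_U \geq s_{W^{(k-1)}}$ from $R_k$, this yields $\mathrm{supp}(W^{(k)}) \subseteq \mathrm{supp}(W^{(k-1)}) \setminus R_k$. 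Iterating, $\mathrm{supp}(W^{(m)}) \subseteq \mathrm{supp}(U) \setminus (R_1 \cup \cdots \cup R_m) \subseteq R_0$.

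Because $R_0$ is bare-cleanable, every bare Pauli logical admits a representative in $R_0^c$ and therefore commutes with $W^{(m)}$. As bare Pauli logicals realize every logical Pauli action, $[W^{(m)}]_L$ commutes with the entire logical Pauli group and thus lies in $\mathcal{P}_0$. Since $\bar P_1, \ldots, \bar P_m$ were arbitrary, the $m$-fold nested commutator of $[U]_L$ with any sequence of logical Paulis lies in $\mathcal{P}_0$, which by $m$ applications of definition~\ref{def:CliffordHierarchy} is precisely the statement $[U]_L \in \mathcal{P}_m$. The degenerate base case $m=0$ follows immediately: $U$ already lives on the bare-cleanable $R_0$, so $[U]_L \in \mathcal{P}_0$ without invoking any commutators.

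The main technical obstacle is the geometric bookkeeping encoded in the exponent $2^{k-1}$: at step $k$ the cleaning buffer defining $R_k^+$ must absorb exactly the accumulated spread $s_{W^{(k-1)}}$, and since each commutator at most doubles the spread the buffers are forced to grow geometrically. A merely additive schedule such as $\mathfrak{B}(R_j, j\, s_U)$ would lose control of the support for large $k$. A secondary point worth checking is that the cleaning step multiplies $P_k$ by a gauge element, which is itself a dressed logical with trivial logical action, so that the nested commutators faithfully range over every choice of logical Paulis and the recursive unwinding of the Clifford hierarchy applies without gap.
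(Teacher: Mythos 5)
Your proof is correct and is essentially the paper's own argument (appendix~\ref{sec:proof}) with the induction unrolled into an explicit iteration: the key ingredients --- cleaning a dressed representative of each logical Pauli out of $R_k^+$ via lemma~\ref{lemma:partition_sub}, the doubling of the spread under each group commutator which justifies the $2^{j-1}s_U$ buffers, and the final collapse of the $m$-fold commutator onto the bare-cleanable region $R_0$ --- are exactly those used in the paper. The only presentational difference is that the paper peels off $R_1$ and invokes the inductive hypothesis on the remaining regions with spread $2s_U$, whereas you track $W^{(k)}$, its support, and its spread explicitly through all $m$ steps; the two are equivalent.
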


This means that when dealing with locality-preserving circuits which implement logical unitary gates, it is sufficient to use extended correctable regions such that they overlap in a boundary of width $2^{m - 1} s_U$, where $m$ is the number of regions to be used. 
As such, much of discussion dealing with transversal gates applies to finite depth circuits. 
The proof is presented in appendix~\ref{sec:proof}.

With an assumption of macroscopic code distance alone, one is able to obtain the following statement for topological subsystem codes.

\begin{corollary}\label{Cor:SusbsystemSimpleBound}
Consider a family of subsystem codes with increasing code distance defined by geometrically local gauge generators of diameter bounded by $\xi$ in $D$ spatial dimensions. Then the set of dressed logical unitary gates implementable by constant depth circuits is included in $\mathcal{P_{}}_{D + 1}$.
\end{corollary}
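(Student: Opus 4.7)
The approach is to instantiate lemma~\ref{lemma:hierarchy2} on a Bravyi--K{\"o}nig style geometric partition of the $D$-dimensional lattice, exploiting the subsystem union lemma (lemma~\ref{lemma_union_sub}) to certify the dressed-cleanability of each piece. The essential difference from the stabilizer derivation of theorem~\ref{Thm:BK} is that, in the absence of a union lemma for bare-cleanable regions, none of the partition elements can be promoted to the role of $R_0$. Instead I would take $R_0=\emptyset$, which is trivially bare-cleanable, and assign all $D+1$ pieces of the partition to the dressed-cleanable slots of lemma~\ref{lemma:hierarchy2}, so that the resulting bound is $\mathcal{P}_m$ with $m=D+1$ rather than the stabilizer value $m=D$.

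Concretely, let $s_U$ denote the spread of the constant-depth local circuit implementing $U$ and fix a tile length $\ell$ that is a sufficiently large constant compared to $2^{D+1}s_U$ and $\xi$. Tile the $D$-dimensional lattice by hyper-cubes of side $\ell$ and, following the construction described around figure~\ref{fig_union}, fatten the $m$-dimensional skeletons to obtain disjoint regions $\tilde R_0,\tilde R_1,\ldots,\tilde R_D$ covering the whole lattice, each a disjoint union of $O(1)$-diameter connected components separated pairwise by at least a constant fraction of $\ell$. I would relabel these in an arbitrary order as $R_1,\ldots,R_{D+1}$ and set $R_0=\emptyset$. By the choice of $\ell$, each fattened region $R_j^{+}=\mathfrak{B}(R_j,2^{j-1}s_U)$ is still a union of $O(1)$-diameter connected components that remain pairwise spatially disjoint with respect to the diameter-$\xi$ gauge generators. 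Because the code distance grows with $n$, for sufficiently large $n$ every such component has diameter below $d$ and therefore supports no non-trivial bare logical operator, i.e.\ is dressed-cleanable. Iterating lemma~\ref{lemma_union_sub} over these spatially disjoint components promotes dressed-cleanability to the full $R_j^{+}$, and lemma~\ref{lemma:hierarchy2} applied with $m=D+1$ then yields $[U]_L\in\mathcal{P}_{D+1}$.

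The main obstacle, and the reason for the one-level loss relative to the stabilizer bound $\mathcal{P}_D$, is precisely this unavailability of a bare-cleanable union lemma. Because the stabilizer subgroup of a topological subsystem code is in general generated by geometrically non-local elements that can straddle otherwise gauge-disjoint components, dressed logical Pauli operators supported on a union of small pieces need not factorize across the pieces, so no purely geometric gluing can build a macroscopic bare-cleanable $R_0$. Tightening the bound from $\mathcal{P}_{D+1}$ back to $\mathcal{P}_D$ therefore requires the additional loss-threshold and logarithmic-distance hypotheses treated later in section~\ref{sec:subsystem}, which manufacture a macroscopic bare-cleanable region by probabilistic rather than geometric means.
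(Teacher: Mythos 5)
Your proposal is correct and follows essentially the same route as the paper's own proof: a fattened hyper-cubic skeleton partition into $D+1$ regions, dressed-cleanability of each fattened $R_j^{+}$ via the growing code distance and the subsystem union lemma, the choice $R_0=\emptyset$, and an application of lemma~\ref{lemma:hierarchy2} with $m=D+1$. Your closing remarks on why the bare-cleanable slot cannot be filled geometrically also match the paper's discussion following the corollary.
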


\begin{proof}
Since gauge generators are geometrically local with diameter bounded by $\xi$, the union lemma (lemma~\ref{lemma_union_sub}) applies to dressed cleanable regions that are separated by a distance $\xi$ or larger. Furthermore, any region with volume smaller than $d$ is dressed-cleanable by the definition of a code distance. Let $s_{U}$ be the spread of the circuit $U$. One has $d > ( 2^D s_U + \xi)^D$ for sufficiently large $n$ since the code has a macroscopic distance. Then the lattice may be partitioned into $D+1$ disjoint regions $\{ R_j \}_{j \in [ 1, D + 1]}$ such that $R_j^+ \assign \mathfrak{B} ( R_j, 2^{j - 1} s_U)$ is dressed-cleanable for all $j>0$. For instance, we construct a $D$-dimensional hyper-cubic tiling and fatten $m$-dimensional objects to obtain ${R_{}}_{m+1}$ for $m=0,\ldots,D$. By taking $R_{0}$ to be an empty set $\emptyset$, we conclude that the logical action of $U$ is included in ${P_{}}_{D + 1}$.
\end{proof}

Note that $R_{j}$ are dressed-cleanable, but not necessarily bare-cleanable since the union lemma does not hold for bare-cleanable regions. Thus, we needed to take $R_{0}$ to be an empty set, which result in increasing the level of the implementable Clifford hierarchy by one with respect to BK's result for topological stabilizer codes. An interesting open problem is to find subsystem codes with growing distance which achieve the bound stated in corollary \ref{Cor:SusbsystemSimpleBound}. If such subsystem codes exist, we believe that they would be highly artificial and would possess highly non-local stabilizer generators. 

We now further assume that the family of codes has a non-zero loss threshold $p_l > 0$ and that a code distance $d$ grows at least logarithmically with the number of particles $n$. Under these reasonable and perhaps indispensable assumptions for fault-tolerance of the code, we obtain the same thesis as BK's result for topological subsystem codes.

\begin{theorem}
Consider a family of subsystem codes with geometrically local gauge generators in $D$ spatial dimension with i) a loss threshold $p_l>0$ and ii) a code distance $d = \Omega ( \log^{1-1/D}( n) )$. Then any dressed logical unitary that can be implemented by a constant depth geometrically local circuit $U$ must belong to $\mathcal{P}_D$.
\end{theorem}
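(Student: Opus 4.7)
My plan is to apply Lemma~\ref{lemma:hierarchy2} with $m = D$. I will exhibit $D+1$ mutually non-intersecting regions $R_0, R_1, \ldots, R_D$ whose union is the full lattice, with $R_0$ bare-cleanable and each $R_j^+$ dressed-cleanable for $j \geq 1$. Compared to the proof of Corollary~\ref{Cor:SusbsystemSimpleBound}, the loss-threshold hypothesis (i) supplies exactly the one bare-cleanable region I am missing, reducing the number of regions from $D+2$ to $D+1$ and hence the implied Clifford level from $D+1$ down to $D$. The code-distance hypothesis (ii) continues to supply the dressed-cleanable regions through the dressed union lemma (Lemma~\ref{lemma_union_sub}).

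For the deterministic skeleton I fix a hypercubic tiling of side $\ell$ with constant fattening width $w > \xi + 2^{D} s_U$, choosing $\ell$ to grow with $n$ slowly enough that $\ell^D < d$; hypothesis (ii) permits $\ell = \Theta(d^{1/D})$. Let $\tilde C_m$ ($m = 0, 1, \ldots, D$) denote the union of fattened $m$-dimensional cells of the tiling, constructed with intercell gaps of order $w$ so that distinct $\tilde C_m$ are spatially separated. Each connected component of $\tilde C_m$ has volume $O(\ell^m w^{D-m}) \leq O(\ell^D) < d$ and is hence dressed-cleanable, and after extension by the constant $2^{j-1} s_U$ the gaps between components still exceed $\xi$. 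Iterated application of Lemma~\ref{lemma_union_sub} then yields $\tilde C_m^+$ dressed-cleanable for every $m$. Crucially, since fattened $0$-cells sit at tile vertices while fattened $D$-cells sit at tile interiors, the components of these two families are spatially non-adjacent, so the same argument shows that $(\tilde C_0 \cup \tilde C_D)^+$ is dressed-cleanable as well.

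For the probabilistic piece I pick any $0 < p < p_l$ and let $R_0$ be the IID random subset of the lattice obtained by retaining each qubit independently with probability $p$. By hypothesis (i), $R_0$ is correctable, equivalently bare-cleanable, with probability approaching $1$ as $n \to \infty$; in particular this probability is positive for sufficiently large $n$, so a bare-cleanable realization exists. Fix such a realization and define $R_1 = (\tilde C_0 \cup \tilde C_D) \setminus R_0$ and $R_j = \tilde C_{j-1} \setminus R_0$ for $j = 2, \ldots, D$. The $D+1$ regions are pairwise disjoint, their union is the whole lattice because $\bigcup_m \tilde C_m$ already is, and each $R_j^+$ is contained in a region whose extension is dressed-cleanable and is therefore itself dressed-cleanable (no bare logical operator supported on a subset of a region containing no bare logical operator). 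Lemma~\ref{lemma:hierarchy2} with $m = D$ then identifies the logical action of $U$ with an element of $\mathcal{P}_D$.

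The main geometric obstacle is to package the $D+1$ deterministic fattened cells $\tilde C_0, \ldots, \tilde C_D$ into only $D$ dressed-cleanable regions, making room for $R_0$. The merge $\tilde C_0 \cup \tilde C_D$ succeeds precisely because $0$- and $D$-cells are spatially non-adjacent, separated by intermediate fattened cells of width $\Theta(w)$, so a constant extension does not fuse their components; any merge of two adjacent cell types would instead produce connected components spanning many tiles, of unbounded volume, for which the dressed union lemma gives no cleanability. Assumption (ii) is what allows $\ell$ to be tuned so that the maximal component volume $O(\ell^D)$ stays safely below $d$ uniformly in $n$ while $\ell$ is not pinned to a single fixed constant.
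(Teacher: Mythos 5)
Your reduction from $D+2$ regions to $D+1$ hinges entirely on the claim that the fattened $0$-cells $\tilde C_0$ and the $D$-cells $\tilde C_D$ are spatially non-adjacent, so that Lemma~\ref{lemma_union_sub} certifies $\tilde C_0 \cup \tilde C_D$ as a single dressed-cleanable region. This claim is false, and the step fails for a geometric reason that is independent of how large you take $w$. For the connected components of $\tilde C_1$ to be mutually separated, the fattened vertex must swallow the region where the fattened edges incident to that vertex overlap: in $D=2$, the $w_1$-neighborhoods of two edges meeting at the origin intersect in a set reaching out to Euclidean distance $w_1\sqrt{2}$ along the diagonal, so you are forced to take the vertex radius $w_0 \gtrsim w_1\sqrt{2} + \xi$. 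But the tile-interior component of $\tilde C_D$ begins, along that same diagonal, exactly at distance $w_1\sqrt{2}$ from the vertex (it is the set of points farther than $w_1$ from every edge). Hence the fattened vertex necessarily protrudes up to (indeed past) the corner of the tile interior: the two conditions ``fattened edges mutually separated'' and ``fattened vertex separated from the tile interior'' are contradictory, and the analogous computation with $w_0 \gtrsim w_1\sqrt{D/(D-1)} \gtrsim \cdots \gtrsim w_{D-1}\sqrt{D}$ shows the same obstruction in every dimension. All $D+1$ cell classes are mutually adjacent at the vertices; this is precisely why the standard fattening needs $D+1$ colors. A quick consistency check confirms the merge cannot be saved by a cleverer fattening: if it could, you could set $R_0=\emptyset$ and conclude $\mathcal{P}_D$ for any subsystem code with growing distance (making hypothesis (i) and Corollary~\ref{Cor:SusbsystemSimpleBound} vacuous), and the identical argument applied to a $2$D topological \emph{stabilizer} code would place every locality-preserving gate in $\mathcal{P}_1$, contradicting the transversal Clifford gates of the $2$D color code. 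The fact that your argument never actually uses $d=\Omega(\log^{1-1/D}(n))$ but only $d\to\infty$ is another symptom of the same problem.

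The paper resolves this differently: rather than merging two deterministic cell classes, it makes the $0$-cells themselves bare-cleanable by construction. One samples a loss set $R_{\tmop{loss}}$ of density $p_0<p_l$ (correctable, hence bare-cleanable, with high probability), shows that every unit cell of volume $c\log(n)$ contains a constant-radius ball of $R_{\tmop{loss}}$, and takes $R_0$ to be one such ball per cell; a skewed hypercubic tiling is then \emph{anchored} at these balls, so its fattened $0$-skeleton is a subset of $R_{\tmop{loss}}$ and needs no separate dressed-cleanable slot. The price is that the tiles are forced to have volume $\Theta(\log n)$, which is why the lower-dimensional cells require $d=\Omega(\log^{1-1/D}(n))$ and the $D$-cells additionally require the disentangling lemma to push any putative bare logical operator onto their boundary. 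Your use of Lemma~\ref{lemma:hierarchy2}, the treatment of the random set as $R_0$ with the deterministic regions defined as set differences, and the subset-monotonicity of cleanability are all fine; the proof collapses only at the merge, but that is the load-bearing step.
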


As a side note, we remark that our proof technique borrows an idea by Hastings which was used on a different topic~\cite{Hastings11}.

\begin{proof}
Let us assume for simplicity that $U$ is transversal. The argument leading to lemma~\ref{lemma:hierarchy2} suffices to make the current proof applicable to a constant depth geometrically local circuit by taking care of some cumbersome yet inessential caveats. 

Imagine that some subset of qubits, denoted as $R_{\tmop{loss}}$, is lost. This subset $R_{\tmop{loss}}$ is chosen so that each site has an independent probability $p_{0}<p_{l}$ of being included in $R_{\tmop{loss}}$. By definition of loss error threshold, $R_{\tmop{loss}}$ must be correctable (in other words, bare-cleanable) with probability approaching to unity as the system size $n$ grows. The key idea is to make use of this randomly generated bare-cleanable region $R_{\tmop{loss}}$ to construct a bare-cleanable region $R_0$ which consists of spatially disjoint balls of constant radius.

For any fixed region $R$, the probability that $R$ is included in $R_{\tmop{loss}}$ is given by $\Pr ( R \subseteq R_{\tmop{loss}}) = p_0^{| R |}$. So, given a ball of radius $r \gg \xi $, it is included in $R_{\tmop{loss}}$ with some constant probability independent of $n$. Let us now split the full lattice into unit cells of volume $v_c = c \log ( n)$ as in Fig.~\ref{fig_random}. Inside a given unit cell, the probability of having no ball of radius $r$ included in $R_{\tmop{loss}}$ is $O ( 1 / \tmop{poly} ( n))$ where the power of $n$ can be made arbitrary large by increasing a finite constant $c$. Hence, with probability approaching to unity, $R_{\tmop{loss}}$ includes at least one ball of radius $r$ in each unit cell. We choose one ball from each unit cell so that they are spatially disjoint, and denote its union as $R_{0}$. Then a bare-correctable region $R_{0}$ consists of balls of diameter $r$ that are spatially disjoint with at most $O(\log(n)^{1/D})$ linear separation. Imagine a skewed $D$-dimensional hyper-cubic tiling by drawing lines which connect balls in $R_{0}$ (see Fig.~\ref{fig_random}). We then fatten $m$-dimensional objects to construct a covering of the full lattice with $R_{m}$ for $m=0,\ldots,D$.

It remains to prove that $R_{m}$ for $m>0$ are dressed-cleanable. Any region with volume smaller than $d = \Omega ( \log^{1-1/D}(n))$ is cleanable. For $m<D$, $R_{m}$ consists of connected components with volume at most $O( \log^{1-1/D} (n))$, and hence are dressed-cleanable. For $R_{D}$, suppose that there exists a non-cleanable $D$-dimensional connected component, denoted as $R$, with volume $O(\log(n))$. Then $R$ must support at least one bare logical Pauli operator $U_{\tmop{bare}}$. Yet, the disentangling lemma~\cite{Bravyi10} tells that $U_{\tmop{bare}}$ can be supported by qubits that live on the boundary of $R$, whose volume is at most $ O( \log^{1-1/D} (n))$, leading to a contradiction. Therefore, $R_{D}$ is dressed-cleanable. Given a bare-cleanable region $R_{0}$ and dressed cleanable regions $R_{m}$ for $m=1,\ldots,D$, lemma~\ref{lemma:hierarchy} implies that transversally implementable $U$ must be included in $\mathcal{P}_{D}$.
\end{proof}

\begin{figure}[htb!]
\centering
\includegraphics[width=0.70\linewidth]{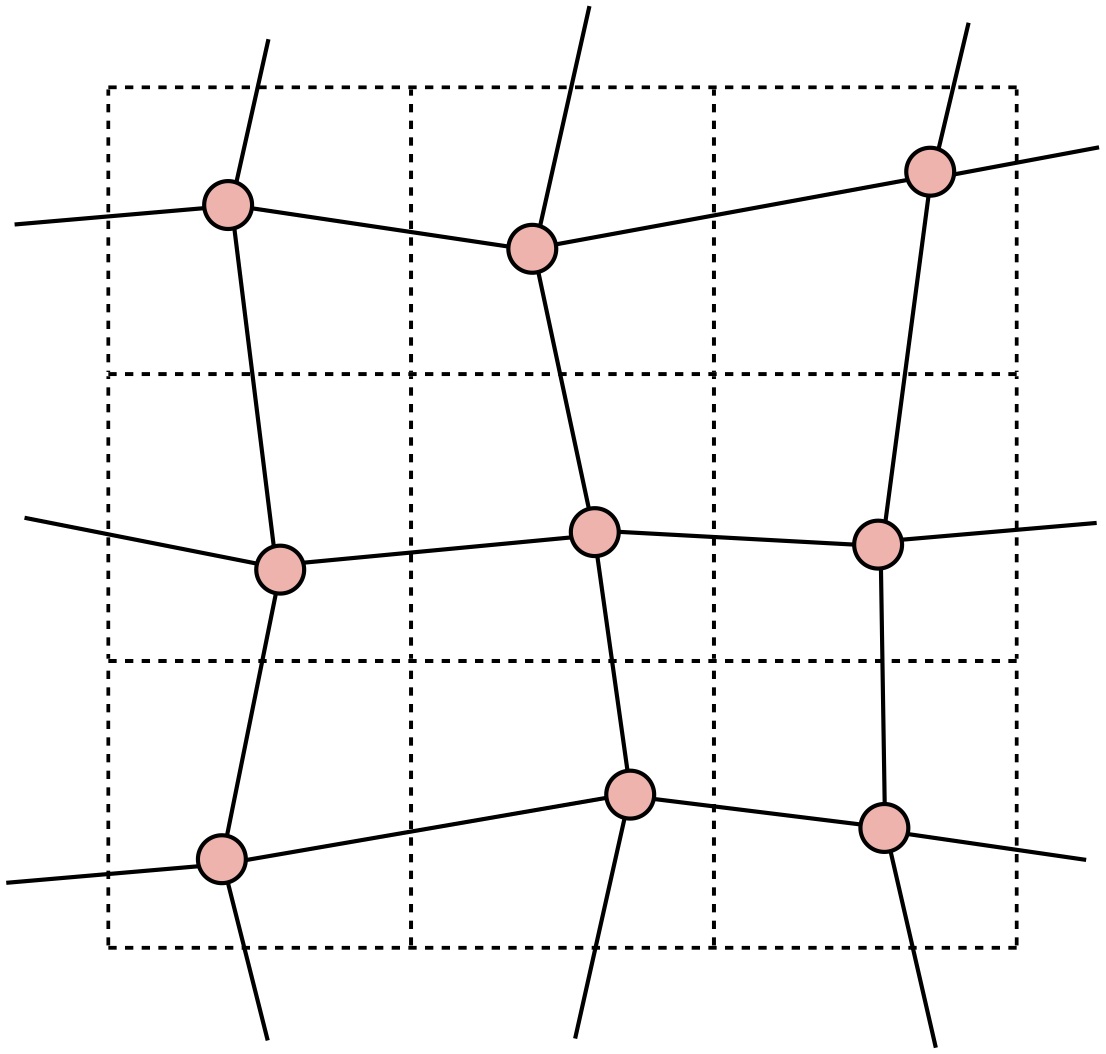}
\caption{A construction of a bare-cleanable region $R_{0}$. Circles represent balls that are included in randomly generated subset $R_{\tmop{loss}}$ of qubits. Dotted lines mark unit cells with volume $O(\log(n))$.
} 
\label{fig_random}
\end{figure}

A further observation is that constant depth circuits supported on a string-like region must be Pauli operators, and in general, constant depth logical operators supported on a $m$-dimensional region must be in $\mathcal{P}_m$ regardless of the spatial dimension of the lattice $D\geq m$. 

\section{Non-clifford gate prohibits self-correction}\label{sec:self-correction}

The problem of self-correcting quantum memories seeks to provide a Hamiltonian where the energy landscape prevents qubit errors at the physical level from accumulating and irreversibly introducing a logical error in contact with a thermal environment~\cite{Dennis02, Bravyi09}. Formally, self-correcting quantum memory is defined as a many-body quantum system where a logical qubit may be encoded for a macroscopic time~\cite{Beni11}. An important question is whether such a system may exist in three spatial dimensions. No-go results have ruled out most two-dimensional systems and a certain class of three-dimensional systems~\cite{Bravyi09,Haah10, Beni11,Landon-Cardinal13}, and no known three-dimensional model has macroscopic quantum memory time.  

In this section, we derive a new no-go result on three dimensional self-correcting quantum memory that arises from fault-tolerant implementability of a non-Clifford gate. 
In particular, we show that a stabilizer Hamiltonian with a locality-preserving non-Clifford gate cannot have a macroscopic energy barrier, and thus it is not expected to provide a practical increase in memory time in terms of the system size $n$. 
We then derive an upper bound on the code distance of topological stabilizer codes with locality-preserving logical gates from the higher-level Clifford hierarchy. 

\subsection{Self-correction and fault-tolerance}

For a topological stabilizer code, the stabilizer Hamiltonian is composed of geometrically local operators in the stabilizer group: $H = - \sum_{j} S_{j}$ where $S_{j}\in \mathcal{S}$. A non-rigorous yet commonly used proxy to assess whether self-correction can be achieved is the presence of a macroscopic energy barrier that scale with the system size. Macroscopic energy barrier seems to be a necessary yet insufficient condition for the system to exhibit macroscopic memory time~\footnote{Models proposed in~\cite{Haah11, Michnicki12} have a macroscopic energy barrier, yet quantum memory time is upper bounded by constant, which is perhaps due to topological transition temperature being zero. Finite transition temperature is not sufficient to guarantee exponentially growing quantum memory time~\cite{Beni14}}. For stabilizer Hamiltonians, the presence of string-like logical operators implies the absence of a macroscopic energy barrier~\footnote{The absence of string-like logical operators does not necessarily imply the presence of macroscopic energy barrier~\cite{Bravyi11b}.}. 

Here, we find a tradeoff on locality-preserving logical gates arising from a macroscopic energy barrier in a stabilizer Hamiltonian. 

\begin{theorem}
If a stabilizer Hamiltonian in $D$ spatial dimensions has a macroscopic energy barrier, the set of fault-tolerant logical gates is restricted to $\mathcal{P}_{D - 1}$.\label{Thm:NoStringStabilizer}
\end{theorem}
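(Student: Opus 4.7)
The plan is to modify Bravyi-K\"onig's partition (Theorem~\ref{Thm:BK}) so that only $D$ cleanable regions are required, after which lemma~\ref{lemma:hierarchy2} applied with $m=D-1$ immediately forces $[U]_L \in \mathcal{P}_{D-1}$. The extra ingredient is the well-known folklore that a macroscopic energy barrier precludes string-like logical operators: any logical Pauli operator supported on a thickening of a bounded-degree one-dimensional graph admits a sequential single-qubit implementation in which, at each intermediate step, the set of violated stabilizer generators sits in a constant-sized neighbourhood of the ``pen-tip'' of a depth-first traversal of the underlying graph. Its syndrome weight is therefore $O(1)$, contradicting the assumed macroscopic barrier, so no such logical operator can exist.

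Concretely, I would fix a coarsened hypercubic tiling of the $D$-dimensional lattice with cell side $\ell$ satisfying $\ell \gg s_U, \xi$ and $\ell^D < d$ (consistent for large enough $L$ since $d$ grows with $L$). Take the standard BK assignment of fat $m$-cells $R_0, R_1, \ldots, R_D$ and collapse the first two into a single region $R_0' := R_0 \cup R_1$, whose connected components are star-shaped thickenings of a coarse vertex together with the adjacent half-edges. Relabel the remaining ones as $R_1, \ldots, R_{D-1}$. Each component of $R_m$ for $m \geq 1$ is a fat $(m+1)$-cell of volume $O(\ell^D) < d$, hence bare-cleanable by the code distance; with a buffer larger than $\xi + 2^{D-2}s_U$ these components are spatially disjoint, so the union lemma (lemma~\ref{lemma_union_stab}) applies and the whole region is cleanable even after the fattening prescribed by lemma~\ref{lemma:hierarchy2}. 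For $R_0'$, the no-string argument above shows that each star component is bare-cleanable, and the same union-lemma packaging yields cleanability of $R_0'$ as a whole. Invoking lemma~\ref{lemma:hierarchy2} with the $D$ regions $R_0', R_1, \ldots, R_{D-1}$ then forces $[U]_L \in \mathcal{P}_{D-1}$.

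The main obstacle I expect is the no-string-on-a-star lemma. For a simple fat line the endpoint bookkeeping is textbook; for a star with up to $2D$ branches one must verify that when a depth-first traversal finishes one branch and returns to the central vertex, the previously visited branch carries no residual syndrome (it does not, since any fully covered sub-region contributes only a boundary at the traversal pointer), so that the syndrome weight while exploring a new branch is again $O(1)$ and the total barrier along the application remains bounded by the maximum degree $2D$ times the fattening width --- a constant. A secondary piece of bookkeeping is choosing $\ell$ to jointly satisfy $\ell^D < d$ and $\ell$ larger than all buffer thickenings $2^{j-1}s_U + \xi$ required by lemma~\ref{lemma:hierarchy2}, which is compatible because $s_U$ and $\xi$ are fixed constants while $d$ diverges with $L$.
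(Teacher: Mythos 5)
Your overall strategy --- reduce the BK partition from $D+1$ to $D$ regions by exploiting the absence of string-like logical operators, then invoke lemma~\ref{lemma:hierarchy2} with $m=D-1$ --- is the right one, and your depth-first ``pen-tip'' argument for why a macroscopic energy barrier forbids logical operators supported on bounded-degree one-dimensional regions is sound (the residual interfaces at the hub of a star contribute only $O(D)$ extra syndrome, as you note). The gap is in the geometry of $R_0'$. The connected components of $R_0\cup R_1$ in the standard BK assignment are \emph{not} disjoint stars: $R_0\cup R_1$ is the fattened $1$-skeleton of the coarse tiling, which is a single connected region. To obtain disjoint stars you must sever every edge at its midpoint, and then either (i) the two half-edges of an edge meet, so the stars at its two endpoints touch and a local stabilizer generator near the midpoint intersects both, violating the hypothesis of the union lemma (lemma~\ref{lemma_union_stab}); or (ii) you leave a gap of width $>\xi$ at each midpoint, in which case those qubits are covered by none of your $D$ regions. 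Neither horn is benign: in case (i) the union of the touching stars is the entire $1$-skeleton, which is not string-like (it contains arbitrarily long, even non-contractible, one-dimensional subsets), so neither the union lemma nor your no-string argument certifies its cleanability; in case (ii) reassigning the edge-midpoints to the higher regions destroys the disjointness of \emph{their} components, since each edge abuts many $2$-cells and $D$-cells.

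The paper avoids this entirely by taking all $D$ regions to be unions of well-separated parallel \emph{tubes} (Fig.~\ref{fig:ParallelTubes}): for instance, decompose a $(D-1)$-dimensional cross-section into $D$ families of bounded, mutually separated cells in BK fashion and sweep each family along the remaining axis. Every component is then a genuine string-like tube, cleanable by the no-string property, and the components within each family are separated by a buffer, so lemma~\ref{lemma_union_stab} and lemma~\ref{lemma:hierarchy2} apply verbatim. If you replace your star region by such a tube decomposition, the rest of your bookkeeping --- the choice of $\ell$, the buffers $2^{j-1}s_U+\xi$, and the (worth stating explicitly) observation that a macroscopic barrier forces $d\to\infty$ so that the volume bound eventually holds --- goes through.
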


\begin{proof}
Let $R_0, R_1, \ldots, {R_{}}_{D-1}$ be regions which jointly cover the whole lattice. Each region is a collection of disjoint parallel tubes with a fixed orientation (see Fig. \ref{fig:ParallelTubes}). This covering can generically be achieved for a $D$-dimensional lattice. The presence of a macroscopic energy barrier implies the absence of string-like logical operators. Since there are no logical operators supported on individual tubes, there are no logical operators supported on any of single regions $R_j$ due to the union lemma. In other words, regions $R_j$ are cleanable. Applying lemma~\ref{lemma:hierarchy2}, we conclude that constant depth logical gates should be restricted to $\mathcal{P_{}}_{D - 1}$.
\end{proof}

\begin{figure}[htb!]
\centering
\includegraphics[width=0.85\linewidth]{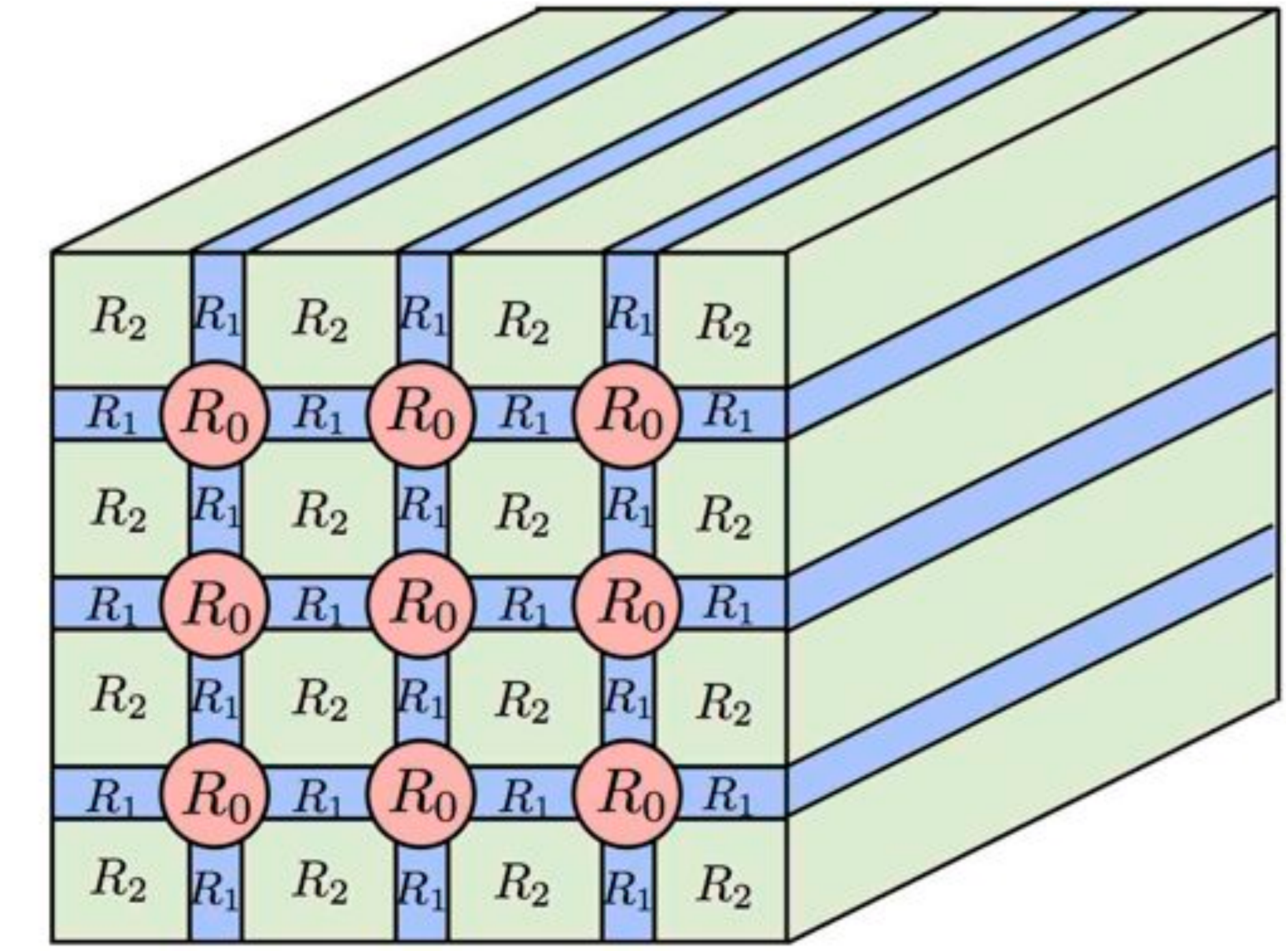}
\caption{The partition of the lattice into $R_{0},R_{1},\ldots,{R_{}}_{D-1}$ for $D=3$. 
} 
\label{fig:ParallelTubes}
\end{figure}

Haah~\cite{Haah11, Bravyi13} provided the first example of a three-dimensional topological stabilizer code which is free from of string-like logical operators. The code is defined on a three dimensional $L \times L \times L$ cubic lattice with an energy barrier scaling as $O( \log L)$. There also exist a number of three-dimensional translation symmetric stabilizer codes which are free from string-like logical operators~\cite{Kim12, Beni13}. By theorem~\ref{Thm:NoStringStabilizer}, for $D=3$, the presence of a macroscopic energy barrier implies that the set of locality-preserving logical gates is restricted to $\mathcal{P}_2$.

\begin{corollary}
Haah's 3D stabilizer code~\cite{Haah11} has no constant depth logical gates outside of $\mathcal{P}_2$.
\end{corollary}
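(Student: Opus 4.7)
The plan is to apply Theorem~\ref{Thm:NoStringStabilizer} with $D = 3$ directly to Haah's cubic code, which is a three-dimensional topological stabilizer code defined by geometrically local generators on an $L \times L \times L$ cubic lattice. The excerpt already recalls that Haah~\cite{Haah11} (analyzed further by Bravyi--Haah~\cite{Bravyi13}) established an energy barrier growing as $\Omega(\log L)$, and more importantly the stronger structural fact that the code admits no string-like logical operators at all. Since the hypothesis actually needed by Theorem~\ref{Thm:NoStringStabilizer} is the absence of string-like logical operators (the macroscopic energy barrier serves only as a sufficient condition for this), the hypothesis is satisfied.

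The key step is therefore the following. Partition the cubic lattice into three families $R_0, R_1, R_2$ of parallel disjoint tubes aligned along the three coordinate axes, exactly as in Figure~\ref{fig:ParallelTubes}, with tube cross-section chosen slightly larger than the interaction range $\xi$ and larger than the spread $s_U$ of the putative locality-preserving logical gate $U$ (this is where we use that $U$ is implementable by a constant-depth geometrically local circuit). Because Haah's code has no string-like logical operators, each individual tube supports no nontrivial logical Pauli operator, so each tube is cleanable. The tubes within a given $R_j$ are spatially disjoint with respect to the local stabilizer generators, so the stabilizer-code union lemma (Lemma~\ref{lemma_union_stab}) implies that each $R_j$ is itself cleanable; upon fattening by the appropriate $2^{j-1} s_U$ margin, this remains true.

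Having exhibited three cleanable regions covering the lattice, Lemma~\ref{lemma:hierarchy2} (or equivalently Lemma~\ref{lemma:hierarchy} in the transversal case) then forces the logical action of $U$ to lie in $\mathcal{P}_{3-1} = \mathcal{P}_2$, yielding the corollary.

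There is essentially no obstacle beyond bookkeeping: the whole content of the corollary is packaged into the string-free property of Haah's code, which is already available in the literature. The only point one must be slightly careful about is ensuring that the tube cross-section is chosen large enough to accommodate both the range $\xi$ of the stabilizer generators (so that the union lemma applies to the disjoint tubes within each $R_j$) and the spread $s_U$ of the circuit (so that the fattened regions $R_j^{+}$ used in Lemma~\ref{lemma:hierarchy2} remain cleanable), which is possible for all sufficiently large $L$ since the tube length $L$ may be taken much larger than these constants.
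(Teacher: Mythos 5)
Your proposal is correct and follows essentially the same route as the paper, which simply invokes Theorem~\ref{Thm:NoStringStabilizer} for $D=3$; you merely unpack that theorem's proof (tubes, union lemma, Lemma~\ref{lemma:hierarchy2}). Your observation that the operative hypothesis is really the absence of string-like logical operators---rather than the energy barrier itself, which for Haah's code is only $O(\log L)$---is exactly the right reading of how the theorem is being applied here.
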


A different approach to construct stabilizer codes with a macroscopic energy barrier has been proposed by Michnicki~\cite{Michnicki12}, who introduced the notion of code welding to construct new codes by combining existing ones. The welding technique leads to a construction of a topological stabilizer code with a polynomially growing energy barrier in three spatial dimensions. Our theorem~\ref{Thm:NoStringStabilizer} also applies to this code.

\begin{corollary}
Michnicki's 3D welded stabilizer code has no constant depth logical gates outside of $\mathcal{P}_2$.
\end{corollary}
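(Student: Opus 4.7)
The plan is to argue that Michnicki's welded code satisfies all the hypotheses of Theorem~\ref{Thm:NoStringStabilizer} for $D=3$, after which the corollary follows immediately. First I would recall the structure of the welded construction: one takes several copies of the three-dimensional toric code and ``welds'' them along common two-dimensional faces, producing a composite stabilizer code whose Hamiltonian is a sum of geometrically local stabilizer generators with constant weight (and hence bounded operator norm). In particular, the welding modifies only a finite-width neighborhood of each welding interface, so locality and bounded norms are preserved uniformly in the linear size $L$. This confirms that the welded code qualifies as a three-dimensional stabilizer Hamiltonian with geometrically local terms of bounded norm, matching the first hypothesis of Theorem~\ref{Thm:NoStringStabilizer}.

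Next I would invoke Michnicki's central result~\cite{Michnicki12}, namely that the energy barrier separating distinct logical sectors in the welded code grows polynomially in $L$. Polynomial growth is, a fortiori, unbounded growth, so the energy barrier is macroscopic in the sense required by Theorem~\ref{Thm:NoStringStabilizer}. With both hypotheses verified, applying the theorem with $D=3$ yields that every logical gate admitting a constant-depth geometrically local implementation must lie in $\mathcal{P}_{D-1}=\mathcal{P}_{2}$, which is precisely the statement of the corollary.

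The main obstacle I anticipate is the bookkeeping step of checking that the welding procedure genuinely produces a stabilizer Hamiltonian with geometrically local, bounded-norm generators in the sense used by Theorem~\ref{Thm:NoStringStabilizer}, as opposed to introducing generators whose weight (or norm) scales with the system size. Once this is confirmed from Michnicki's construction, the remainder of the argument is a direct citation of Theorem~\ref{Thm:NoStringStabilizer} together with the polynomial energy-barrier estimate, so no additional machinery is required.
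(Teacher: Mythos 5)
Your argument is exactly the paper's: note that Michnicki's welded code is a three-dimensional stabilizer Hamiltonian with a polynomially growing (hence macroscopic) energy barrier and apply Theorem~\ref{Thm:NoStringStabilizer} with $D=3$ to conclude that locality-preserving gates lie in $\mathcal{P}_{2}$ --- the paper itself offers no more justification than the one-line remark that the theorem applies to this code. The only caveat concerns the verification step you flag yourself: the welded stabilizer generators do \emph{not} have constant weight (the weight of a welded plaquette grows with the number of solid codes sharing a weld, hence with $L$), so what must actually be checked is bounded norm together with geometric locality in the form needed for the union lemma in the proof of Theorem~\ref{Thm:NoStringStabilizer}; this subtlety is glossed over in the paper as well, so your proposal matches its level of rigor.
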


A model of a six-dimensional self-correcting quantum memory with fault-tolerantly implementable non-Clifford gates has been proposed~\cite{Bombin13}. An intriguing question is whether such a code may exist in four (or five) spatial dimensions or not.

We then move to discussion on topological subsystem codes. A generic recipe to construct Hamiltonians for topological subsystem codes is not known. A candidate Hamiltonian, often discussed in the literature, is composed of geometrically local terms in the gauge group: $H= - \sum_{j} G_{j}$~\footnote{Due to the non-commutativity of $G_{j}$, a ground state $|\psi\rangle$ of the Hamiltonian $H= \sum_{j} \alpha_j G_{j}$ is not necessarily inside the stabilized subspace. Indeed it is analytically and computationally difficult to find values of $S_{j}$ in the ground space. For CSS subsystem codes, the ground space of $H= - \sum_{j} G_{j}$ is guaranteed to be in the stabilized subspace defined by $\mathcal{S}$ due to the Perron-Frobenius theorem~\cite{Ocko11}.}. Regardless of the choice of the Hamiltonian, the presence of bare-logical operators with string-like support implies the absence of an energy barrier as long as terms in the Hamiltonian consist only of local generators of the gauge group $\mathcal{G}$. 

For topological subsystem codes, we obtain a less restrictive tradeoff between fault-tolerant implementability and geometric non-locality of logical gates.

\begin{corollary}
  If a topological subsystem code in $D$ spatial dimensions has macroscopic energy barrier, the set of transversal operators is restricted to $\mathcal{P_{}}_D$. 
  \label{Cor:NoStringSubsystem}
\end{corollary}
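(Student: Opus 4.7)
The plan is to adapt the proof of Theorem~\ref{Thm:NoStringStabilizer} to the subsystem setting, using the only version of the union lemma available here, namely lemma~\ref{lemma_union_sub} for dressed-cleanable regions.

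First I would record the structural consequence of the hypothesis: for a gauge Hamiltonian $H = -\sum_j G_j$, a macroscopic energy barrier rules out string-like bare logical operators, exactly as recalled in the text immediately preceding the statement. Equivalently, any tube-shaped region (a region with bounded cross-section extending across the lattice in one direction) supports no bare logical operator, and is therefore dressed-cleanable by Definition~\ref{def:BareCleanable1}.

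Next, I would construct the geometric partition: tile the $D$-dimensional lattice by $D$ families of disjoint parallel tubes $R_1, \ldots, R_D$, with the $j$-th family using a distinct tube orientation, as in Fig.~\ref{fig:ParallelTubes}. Within each family, the tubes are mutually spatially disjoint, so applying lemma~\ref{lemma_union_sub} promotes the dressed-cleanability of individual tubes to dressed-cleanability of each full family $R_j$. The union $R_1 \cup \cdots \cup R_D$ covers the entire lattice.

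Finally, I would invoke lemma~\ref{lemma:hierarchy2} with $R_0 \assign \emptyset$ (trivially bare-cleanable) together with the $D$ dressed-cleanable regions $R_1, \ldots, R_D$ (and, if one wants constant-depth circuits rather than strict transversality, with the fattened neighborhoods $R_j^+$, assuming tube spacings exceed $2^{D-1} s_U$). The conclusion is that the logical action of any transversal operator lies in $\mathcal{P}_D$.

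The main obstacle, and the reason the bound here is $\mathcal{P}_D$ rather than the sharper $\mathcal{P}_{D-1}$ of the stabilizer case, is precisely that the subsystem union lemma preserves only dressed-cleanability. In the stabilizer proof, one of the tube-union regions can play the role of a bare-cleanable $R_0$; here we are forced to set $R_0 = \emptyset$, so the induction underlying lemma~\ref{lemma:hierarchy} spends one extra level of the Clifford hierarchy. If the subsystem code happened to admit a geometrically local set of stabilizer generators as well (as in Bombin's gauge color code), the union lemma for bare-cleanability would be recovered and the tighter $\mathcal{P}_{D-1}$ bound would follow by the same argument.
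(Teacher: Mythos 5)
Your proposal is correct and follows exactly the route the paper intends: the paper leaves this corollary without an explicit proof, but the surrounding text makes clear that the argument is the tube partition of Theorem~\ref{Thm:NoStringStabilizer} combined with the dressed-only union lemma (lemma~\ref{lemma_union_sub}) and the choice $R_0=\emptyset$, as in the proof of Corollary~\ref{Cor:SusbsystemSimpleBound}. Your closing remark correctly identifies why the bound weakens to $\mathcal{P}_D$ and when the $\mathcal{P}_{D-1}$ bound is recovered, matching the paper's own discussion.
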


The three-dimensional gauge color code has transversal gates in $\mathcal{P}_2$ and do not have string-like bare logical operators, and hence are not ruled out from having a macroscopic energy barrier.

\subsection{Upper bound on code distance}

Geometric non-locality of logical operators, such as no-string hypothesis, imposes a restriction on fault-tolerantly implementable gates in topological stabilizer codes as in theorem~\ref{Thm:NoStringStabilizer}. Reversing the argument, one may observe that fault-tolerant implementability of logical gates from the higher-level Clifford hierarchy imposes a restriction on geometric non-locality of logical operators. 

Here we find a tradeoff between the code distance and fault-tolerant implementability of logical gates. 

\begin{theorem}
If a topological stabilizer code in $D$ spatial dimensions admits a locality preserving implementation for a logical gate from $\mathcal{P}_{m}$, but outside of $\mathcal{P}_{m-1}$, its code distance is upper bounded by $d \leq O(L^{D+1-m})$.\label{thm:Distance}
\end{theorem}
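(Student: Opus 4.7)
The plan is to argue by contrapositive using lemma~\ref{lemma:hierarchy2}. I would assume for contradiction that $U$ is locality-preserving with $[U]_L \in \mathcal{P}_m \setminus \mathcal{P}_{m-1}$ and that $d > C L^{D+1-m}$ for a sufficiently large constant $C$, and then try to construct a partition of the $D$-dimensional lattice into $m$ regions $R_0, R_1, \ldots, R_{m-1}$ for which lemma~\ref{lemma:hierarchy2} would force $[U]_L \in \mathcal{P}_{m-1}$, delivering the contradiction.

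The construction I have in mind is a ``dimensional reduction'' of the BK covering. I would designate $m-1$ of the $D$ lattice directions as \emph{short} and the remaining $D - m + 1$ as \emph{long}. Restricted to the $(m-1)$-dimensional cross-section spanned by the short directions, I apply the standard BK partition: tile by hypercubes of constant side $\ell$, chosen large compared to $\xi + 2^{m-1}s_U$, and take $\tilde{R}_j \subset [0,L]^{m-1}$ to be the $O(s_U)$-fattening of the $j$-skeleton of this tiling, for $j = 0, 1, \ldots, m-1$. Each region is then lifted to the full $D$-dimensional lattice as the cylinder $R_j := [0,L]^{D-m+1} \times \tilde{R}_j$, extruding each cross-sectional component across all the long directions. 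Every connected component of $R_j^+$ is then a slab-tube of volume at most $O(\ell^j \cdot L^{D-m+1})$, with the binding case being the bulk $(j=m-1)$ of volume $O(L^{D+1-m})$. The cell size $\ell$ being larger than $\xi + 2^{m-1}s_U$ ensures that distinct components within each $R_j^+$ remain spatially disjoint, so no local stabilizer generator (nor the fattened support of $U$) can straddle two components of the same region.

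By the distance hypothesis, every such component will contain fewer than $d$ physical qubits and will therefore be cleanable; applying the union lemma~\ref{lemma_union_stab} to the spatially disjoint components within each region will upgrade this to cleanability of each $R_j^+$ as a whole. For stabilizer codes bare- and dressed-cleanability coincide, so the hypotheses of lemma~\ref{lemma:hierarchy2} will be met, giving $[U]_L \in \mathcal{P}_{m-1}$ and the desired contradiction. The main technical point I expect to verify carefully is this spatial disjointness under the combined effect of the $O(s_U)$-fattening and the long-direction extrusion, handled by choosing $\ell$ a sufficiently large $L$-independent constant; for small $L$ the claimed bound is vacuous, so the argument only needs to work for $L$ large enough that the partition fits. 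As sanity checks, plugging in $m = 2$ recovers the Bravyi--Terhal scaling $d = O(L^{D-1})$, and $D = m = 3$ gives the linear bound $d = O(L)$ that 3D color codes are known to saturate.
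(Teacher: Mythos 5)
Your proposal is correct and follows essentially the same route as the paper: partition the lattice into $m$ regions whose connected components are parallel $(D+1-m)$-dimensional tubes of volume $O(L^{D+1-m})$, use the distance hypothesis plus the union lemma to make each region cleanable, and invoke the hierarchy lemma to force $[U]_L\in\mathcal{P}_{m-1}$. The paper states this as a direct argument (some tube must support a logical operator, bounding $d$ by its volume) rather than your contrapositive, and your explicit extrusion construction and use of lemma~\ref{lemma:hierarchy2} with the fattened regions $R_j^+$ is if anything slightly more careful than the paper's appeal to lemma~\ref{lemma:hierarchy}.
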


\begin{proof}
Let $R_0, R_1, \ldots, R_{m-1}$ be regions which jointly cover the whole lattice. Each region is a collection of disjoint parallel $D+1-m$-dimensional objects (Fig. \ref{fig:ParallelTubes} corresponds to the case for $D=3$ and $m=3$). Suppose that there is no logical operator supported on any of single regions $R_j$. Applying lemma~\ref{lemma:hierarchy}, implementable logical operators are restricted to $\mathcal{P_{}}_{m-1}$, leading to a contradiction. Thus, at least one region $R_{j}$ supports a logical operator. Due to the union lemma, such a logical operator can be supported on a single $D+1-m$-dimensional object whose volume is $O(L^{D+1-m})$, which implies $d \leq O(L^{D+1-m})$.
\end{proof}

Bravyi and Terhal have derived an upper bound on the code distance for topological stabilizer and subsystem codes: $d \leq O(L^{D-1})$~\cite{Bravyi09}. Whether the Bravyi-Terhal bound is tight for $D\geq3$ remains open. For $m=2$, our bound is reduced to the Bravyi-Terhal bound. Theorem~\ref{thm:Distance} implies that locality-preserving implementations of non-Clifford gates imposes a further restriction on the code distance of topological stabilizer codes.

Topological color code, proposed in a seminal work by Bombin~\cite{Bombin06, Bombin14}, is a $D$-dimensional topological stabilizer code that admits transversal logical gates from the $D$-th level of the Clifford hierarchy. The code has a string-like logical operator, and thus $d=O(L)$, implying that our bound is tight for $m=D$.

\begin{example}\emph{
  Bombin's $D$-dimensional topological color code saturates the bound in theorem~\ref{thm:Distance}.
  }
\end{example}

It would be interesting if this hypothesis could be combined with the code threshold hypothesis to strengthen the conclusion. 

\section{Conclusions}\label{sec:discussion}

We have provided several extensions of BK's characterization of fault-tolerantly implementable logical gates. Our results are summarized as follows: (i) A three-dimensional stabilizer Hamiltonian with a fault-tolerantly implementable non-Clifford gate is not self-correcting. (ii) The code distance of a $D$-dimensional topological stabilizer code with non-trivial $m$-th level logical gate is upper bounded by $O(L^{D+1-m})$. (iii) A loss threshold of a subsystem code with non-trivial $m$-th level transversal logical gate is upper bounded by $1/m$. (iv) Fault-tolerantly implementable logical gates in a $D$-dimensional topological subsystem code belong to the $D$-th level $\mathcal{P}_{D}$ in the presence of a finite error threshold. 

Open questions include the possibility of further generalizing the result of Bravyi and K{\"o}nig to other families of codes such as frustration-free commuting projector codes. In an upcoming article, we will present a Bravyi and K{\"o}nig type characterization of logical operations implementable by constant depth circuits in the context of topological quantum field theories. 

Another interesting direction to extend these results concerns topological codes with non-geometrically-local gates, and quantum LDPC codes. 
It has been recently proven by the authors that, for families of the toric code and color codes, local constant-depth gates (not necessarily geometrically-local) do not increase the level of the implementable Clifford hierarchy. 
Dissipative dynamics may also be utilized for fault-tolerant logical implementation of topological codes~\cite{Pastawski11}. 

The definition of quantum phases, widely accepted in the literature, is that, two ground state wavefunctions belong to different phases if there is no local unitary transformation connecting them~\cite{Chen10}. Yet even within the ground space of a Hamiltonian, it is possible that different ground states are in different ``phases''. Perhaps, Bravyi and K{\"o}nig type characterization will give a coherent insight on classification of ground state wavefunctions with long-range entanglement.

Fault-tolerant implementability of non-Clifford logical gates is an important ingredient for magic state distillation protocols~\cite{Bravyi05b}. 
An interesting future problem includes the asymptotic rate of the number of magic states that can be distilled with a desired precision. Finally, it may be interesting to study the gauge-fixing technique~\cite{Paetznick13, Bombin14} and code concatenation~\cite{OConnor14} from the viewpoint of Bravyi and K{\"o}nig type characterization.

\section*{Acknowledgements} 

We would like to thank Michael Beverland for pointing out non-geometric interpretation of BK's results, and Robert K{\"o}nig, David Poulin and John Preskill for fruitful discussions. We acknowledge funding provided by the Institute for Quantum Information and Matter, a NSF Physics Frontiers Center with support of the Gordon and Betty Moore Foundation (Grants No. PHY-0803371 and PHY-1125565). BY is supported by the David and Ellen Lee Postdoctoral fellowship.

\appendix

\section{Observations on the Clifford hierarchy}\label{sec:comparison}

In the present work, we have adopted a slightly different definition for the Clifford hierarchy $\mathcal{P}_n$ with respect to the one introduced by Gottesman and Chuang~\cite{Gottesman99} and used by Bravyi and K{\"o}nig
~\cite{Bravyi13b}. In this appendix, we would like to justify that they are mostly equivalent yet the alternate definition permits stating our results in a more compact manner. Let us recall the usual definition.

\begin{definition}
The Clifford hierarchy is usually defined as follows. The first level of the hierarchy is taken to be equivalent to the Pauli group $\mathcal{P}_1 \equiv \mathsf{Pauli}$. Successive levels of the hierarchy are defined recursively as\label{def:CliffordUsual}
\begin{equation}
    \mathsf{Clifford}_{m + 1} = \left\{ U : \forall P \in \mathsf{Pauli},
    \ \nocomma U P U^{\dagger} \subseteq \mathsf{Clifford}_m
    \right\} .
\end{equation}
\end{definition}

The following statement shows how our definition \ref{def:CliffordHierarchy}
is equivalent to definition \ref{def:CliffordUsual}.

\begin{lemma}
$\mathcal{P}_1 = \mathcal{\mathbbm{C} \cdot \mathsf{Pauli}}$ and $\mathcal{P}_n = \mathsf{Clifford}_{n}$ for $n \geq 2$.
\end{lemma}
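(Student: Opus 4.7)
The plan has two parts matching the two claims. For $\mathcal{P}_1 = \mathbb{C} \cdot \mathsf{Pauli}$, I would unfold the definition directly: $U \in \mathcal{P}_1$ means $UPU^\dagger P^\dagger \in \mathcal{P}_0 = \mathbb{C}$, i.e.\ $UPU^\dagger = c(P)\,P$ for every Pauli $P$, where $c(P)$ is a scalar. Checking multiplicativity under $P \mapsto P_1 P_2$ shows $c$ is a homomorphism $\mathsf{Pauli} \to \mathbb{C}^\times$ trivial on the central phases, and hence descends to a character of $\mathsf{Pauli}/\mathbb{C}^\times \cong \mathbb{Z}_2^{2n}$. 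Every such character is realized by conjugation with some Pauli $P_c$, so $c(P) P = P_c P P_c^\dagger$ for all $P$. Then $P_c^\dagger U$ commutes with every Pauli; irreducibility of the Pauli representation on $(\mathbb{C}^2)^{\otimes n}$ together with Schur's lemma forces $P_c^\dagger U = \lambda I$, giving $U = \lambda P_c \in \mathbb{C} \cdot \mathsf{Pauli}$. The reverse containment is immediate.

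For the second claim, the key enabling observation is a closure lemma: for every $n \geq 1$, the set $\mathcal{P}_n$ is stable under right multiplication by Paulis. This follows by a subsidiary induction on $n$ using the identity $(VQ)P(VQ)^\dagger P^\dagger = \pm\, V P V^\dagger P^\dagger$ (which holds because $QPQ^\dagger = \pm P$ for any two Paulis $P,Q$), together with the fact that $\mathcal{P}_{n-1}$ absorbs overall signs, since $\mathbb{C} \subseteq \mathcal{P}_{n-1}$.

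Given this closure, the commutator condition $UPU^\dagger P^\dagger \in \mathcal{P}_n$ can be rewritten as $UPU^\dagger \in \mathcal{P}_n \cdot P = \mathcal{P}_n$, so
\begin{equation}
\mathcal{P}_{n+1} = \{\,U : UPU^\dagger \in \mathcal{P}_n \text{ for all } P \in \mathsf{Pauli}\,\}.
\end{equation}
For the base case $n = 2$: the rewritten condition reads $UPU^\dagger \in \mathbb{C} \cdot \mathsf{Pauli}$, and squaring both sides, using $P^2 = \pm I$ and unitarity of $U$, pins the scalar to $\{\pm 1, \pm i\}$, so $UPU^\dagger \in \mathsf{Pauli}$ and $U \in \mathsf{Clifford}_2$. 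The inductive step for $n \geq 2$ then follows at once: the rewritten characterization of $\mathcal{P}_{n+1}$ combined with the inductive hypothesis $\mathcal{P}_n = \mathsf{Clifford}_n$ reproduces exactly Definition~\ref{def:CliffordUsual} of $\mathsf{Clifford}_{n+1}$.

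The main obstacle is the base-case Schur argument establishing that scalar-commutants of the Pauli group are themselves Paulis up to a global phase; once this is in hand, the remainder is bookkeeping driven by the closure lemma, and the $n = 2$ jump is settled by the short unitarity-plus-squaring check. A minor subtlety to verify along the way is that the conventions for $\mathsf{Pauli}$ (whether the phases $\{\pm 1, \pm i\}$ are explicit generators) are absorbed harmlessly by the $U(1)$-closure of $\mathsf{Clifford}_m$ in either convention.
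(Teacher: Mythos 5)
Your proposal is correct and follows essentially the same route as the paper: both halves hinge on the same two ideas, namely a Schur/spanning argument identifying $\mathcal{P}_1$ with $\mathbbm{C}\cdot\mathsf{Pauli}$, and an induction for $n\geq 2$ driven by closure of the hierarchy levels under right multiplication by Pauli operators. If anything, you are more explicit than the paper at two points it glosses over --- realizing the sign character $c(P)$ by conjugation with an actual Pauli $P_c$, and the $n=2$ base case where the scalar must be pinned to a fourth root of unity --- so no changes are needed.
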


\begin{proof}
  Let us first show that $\mathcal{P}_1 = \mathcal{\mathbbm{C} \cdot \mathsf{Pauli}_{}}$. Suppose that $U \in \mathcal{P}_1$. By hypothesis, the group commutator of $U$ with any Pauli operator $P \in \mathcal{P}$ is trivial up to a phase $U P U^{\dagger} P^{\dagger} = e^{i \theta}$. This phase must be $\pm 1$, since it is an eigenvalue for the rank one superoperator resulting from conjugation by a Pauli operator $P_{} U^{\dagger} P^{\dagger} = e^{i \theta} U^{\dagger}$. Conversely, we may consider the rank one superoperator $U \cdot U^{\dagger}$ for which the Pauli operators constitute a full set of eigenoperators with eigenvalues $\pm 1$. This uniquely determines $U$ to be equivalent to a Pauli operator itself up to a global phase. Here, we have crucially used that the Pauli operators linearly span the full operator algebra.
  
We will now prove that $U \in \mathcal{P}_n \Leftrightarrow U \in \mathsf{Clifford}_n$ by induction for $n \geq 2$. The proof relies on the observation that all the levels of the usual Clifford hierarchy are closed under right multiplication by Pauli operators $\mathcal{\mathsf{Clifford}}_n = \mathcal{\mathsf{Clifford}}_n \cdot \mathsf{Pauli}$ which can be proven inductively.
  
Suppose $U \in \mathcal{P}_{n + 1}$. Hence, \ for any $P \in \mathsf{Pauli}$ we \ have that $U P U^{\dagger} P^{\dagger} \in \mathcal{P}_{n}$ and consequently $U P U^{\dagger} \in \mathsf{Clifford}_n$. This implies that $U \in \mathsf{Clifford}_{n + 1}$. The converse can be proven identically.
\end{proof}

The hierarchy is composed of increasingly larger sets of gates, where $\mathsf{Clifford} \mathcal{}_n \subset \mathsf{Clifford}_{n + 1}$. These sets are closed under group multiplication only for $n \leq 2$. Furthermore, $\mathsf{Clifford}_n /\mathbbm{C}$ are finite sets. For $n > 2$, $\mathsf{Clifford}_n$ generate a dense subset of the full unitary group. A full characterization of subgroups included in $\mathcal{P}_n$ remains an interesting open problem.

\section{Constant depth local circuits (proof of lemma~\ref{lemma:hierarchy2})}\label{sec:proof}

\begin{proof}
Let us assume that the unitary $U$ preserves the codespace and is implementable by a constant depth local quantum circuit with the spread $s_U$. The proof proceeds by induction. Assuming $m = 0$, the dressed logical operator $U$ is supported on a bare-cleanable region and by definition~\ref{def:BareCleanable1} must be a trivial logical operator in $\mathcal{P}_0$.
  
Let us now assume that our statement is true up to $m$ and prove it for $m + 1$. Consider a unitary $U$ with the spread $s_U$ such that
\begin{equation}
    \tmop{supp} ( U) \subseteq \bigcup_{j = 0}^{m + 1} R_j. 
\end{equation}
Any logical Pauli operator $[P]_L$ has a dressed incarnation $P$ fully supported on $\left(R_{1}^+\right)^c$. Observe that
\begin{align}
\tmop{supp} ( UPU^\dagger P^\dagger) & \subseteq \left[  \bigcup_{j = 0}^{m + 1} R_j \right] \cap \mathfrak{B} ( \left(R_{1}^+\right)^c, s_U) \\
    & \subseteq  R_0 \cup \bigcup_{j = 2}^{m + 1} R_j.
\end{align}
Furthermore, we have that $s_{UPU^\dagger P^\dagger} \leq 2 s_U$. Hence, by inductive hypothesis, $ [U{P}U^\dagger {P}^\dagger]_L$, which is also a dressed logical operator, must belong to $\mathcal{P}_m$ when restricted to the codespace. Thus, by definition of the Clifford hierarchy, $[U]_L \in {\mathcal{P}}_{m + 1}$.
\end{proof}


\end{document}